\renewcommand\footnotetextcopyrightpermission[1]{} 
\newcommand{\be}{\begin{equation}}
\newcommand{\ee}{\end{equation}}
\newcommand{\bea}{\begin{eqnarray}}
\newcommand{\eea}{\end{eqnarray}}
\newcommand{\eol}{\end{enumerate}\setlength{\itemsep}{-\parsep}}
\newcommand{\parab}[1]{\paraspace\noindent{\textbf{#1}} }
\newcommand{\paraspace}{\vspace{0.05in}}
\def\hh{\textsc{HashHeap}}
\let\hat\widehat
\def\Nl{{\mathbb N}}
\def\E{{\mathbb E}}
\def\Pr{{\mathbb P}}
\def\cZ{{\mathcal Z}}
\def\var{\mathop{\mathrm{Var}}}
\def\argmin{\mathop{\mathrm{arg\,min}}}
\def\psh{\textsc{PBA}}
\def\papsh{\textsc{PBASH}}
\def\prash{\textsc{PBASH}}
\def\ash{\textsc{ASH}}
\algnewcommand\algorithmicparfor{\textbf{parallel for}}
\algnewcommand\algorithmicpardo{\textbf{do}}
\algnewcommand\algorithmicendparfor{\textbf{end parallel}}
\begin{document}

\title{Stream Aggregation Through Order Sampling}
\thanks{This material is based in part upon work supported by the
 National Science Foundation under Grant Numbers CNS-1618030 and CNS-1701923.}

\author{Nick Duffield}

\affiliation{%
  \institution{Texas A\&M University}
  \streetaddress{College Station, TX}
}
\email{duffieldng@tamu.edu}

\author{Yunhong Xu}
\affiliation{
  \institution{Texas A\&M University}
  \streetaddress{College Station, TX}
}
\email{yunhong@tamu.edu}

\author{Liangzhen Xia}
\affiliation{
  \institution{Texas A\&M University}
  \streetaddress{College Station, TX}
}
\email{xialiangzhen123@tamu.edu}

\author{Nesreen K. Ahmed}
\affiliation{%
  \institution{Intel Labs}
  \streetaddress{Santa Clara, CA}
}
\email{nesreen.k.ahmed@intel.com}

\author{Minlan Yu}
\affiliation{%
  \institution{Yale University}
  \streetaddress{New Haven, CT}
}
\email{minlan.yu@yale.edu}

\begin{abstract}
This paper introduces a new single-pass reservoir weighted-sampling
stream aggregation algorithm, Priority-Based Aggregation (PBA). While order sampling is a powerful and efficient method for weighted sampling from a stream of uniquely keyed items, there is no current algorithm that realizes the benefits of order sampling in the context of stream aggregation over non-unique keys. A naive approach to order sample regardless of key then aggregate the results is hopelessly inefficient. In distinction, our proposed algorithm uses a single persistent random variable across the lifetime of each key in the cache, and maintains unbiased estimates of the key aggregates that can be queried at any point in the stream. The basic approach can be supplemented with a Sample and Hold pre-sampling stage with a sampling rate adaptation controlled by PBA. This approach represents a considerable reduction in computational complexity compared with the state of the art in adapting Sample and Hold to operate with a fixed cache size. Concerning statistical properties, we prove that PBA provides unbiased estimates of the true aggregates. We analyze the computational complexity of PBA and its variants, and provide a detailed evaluation of its accuracy on synthetic and trace data. Weighted relative error is reduced by 40\% to 65\% at sampling rates of 5\% to 17\%, relative to Adaptive Sample and Hold; there is also substantial improvement for rank queries.
\end{abstract}

\keywords{Priority Sampling; Aggregation; Subset Sums; Heavy Hitters}

\maketitle

\section{Introduction}

\subsection{Motivation}

We consider a data stream comprising a set of (key, value) pairs $(k_i,x_i)$. Exact aggregation would entail computing the total value $X_k = \sum_{i:k_i=k}x_i$ for each distinct key $k$ in the stream. For many applications, this is unfeasible due to the storage required to accommodate a large number of distinct keys. This constraint has motivated an extensive literature on computing summaries of data streams. Such summaries can be used to serve approximate queries concerning the aggregates through estimates $\hat X_k$ of $X_k$, typically accomplished by assigning resources to the more frequent keys. 

This problem of stream aggregation has drawn the attention of researchers in Algorithms, Data Mining, and Computer Networking, who have proposed a number of solutions that we review in Section~\ref{sec:related}. Nevertheless, applications of this problem continue to emerge in new settings that bring their own challenges and constraints. These include: streams of transactional data generated by user activity in Online Social Networks \cite{7852324}, transactional data from customer purchases in online retailers \cite{Lo:2016:UBL:2939672.2939729}, and streams of status reports from customer interfaces of utility service providers reported via domestic Internet service \cite{Liu:2016:SMD:3015779.3004295}. 

A well-established application for real-time streams of operational traffic measurements collected by Internet Service Providers (ISPs) has gathered renewed interest in the context of Software Defined Networks (SDN) \cite{Yu:2013:SDT:2482626.2482631}. These provide the opportunity to move beyond industry standard summaries based on Sampled NetFlow and  variants \cite{netflow}. Data Center operators increasingly wish to control traffic at a finer space and time granularity than has been typical for Wide Area Networks, requiring per flow packet aggregates over time scales of seconds or shorter \cite{roy2015inside,li2016flowradar}. 
An important goal is to balance traffic loads over multiple network paths, and between servers. Two distinct analysis functions can support this goal:
\begin{enumerate}[leftmargin=0pt,itemindent=15pt,itemsep=0pt,topsep=0pt,parsep=0pt,partopsep=0pt]
\item[$\bullet$] \textsl{Heavy Hitter Identification.}
Heavy Hitters (HHs) are flows or groups of flows that contain a disproportionate 
fraction of packets and/or bytes. These may be present in the exogenous loads, or may be indicative of underlying problems in the load balancing mechanisms \cite{fb2014dcfabric}.
\item[$\bullet$] \textsl{General Purpose Summarization.} (key, aggregate) summaries over flows or groups for flows can be further aggregated over arbitrary subpopulation selectors, e.g., for what-if analyses for load balancing. This aggregation capability is present in Stream Databases developed to run on high speed traffic measurement systems \cite{cranor2003}.
\end{enumerate}

Sampling is an attractive summarization method for supporting applications including those just described. First, sample sets can serve
downstream applications designed to work with the original data, albeit with approximate results. Second, sampling supports retrospective queries using selectors formulated after the summary was formed. This enables sum queries over subpopulations whose constituent keys are not individually heavy hitters.
Finally, sampling can often be tuned to meet specific goals constraints on memory, computation and accuracy that match data characteristics to query goals. We distinguish between two types of space constraint. The \textsl{working storage} used during the construction of the summary may be limited. An example is stream summarization of Internet traffic by routers and switches, where fast memory used to aggregate packet flows is relatively expensive \cite{Moshref:2015:SSR:2716281.2836099}. But the \textsl{final storage} used for the finished summary generally has a smaller per item requirement than the working storage. A final storage constraint can apply, for example, when storage must be planned or pre-allocated for the summary, or when the size of the summary is limited in order to bound the response time of subsequent queries against it. 

Reservoir Sampling \cite{Vitter:85} is commonly used to obtain a fixed size sample. In stream aggregation reservoir sampling, an arriving item $(k,x)$ is used to modify the current aggregate estimate $\hat X_k$ or $X_k$ if $k$ is in the reservoir, e.g., by adding $x$ to $\hat X_k$. If $k$ is not in the reservoir, and the capacity of $m$ is already used, a random decision is made whether to discard the arriving item, or to instantiate a new aggregate for $k$ while discarding one of the items currently in the reservoir.  In general, discard probabilities are not uniform, but are weighted as a function of aggregate size to realize estimation goals for subsequent analysis.  In addition, estimates of retained items must be adjusted in order to maintain statistical properties of the aggregate estimates, such as unbiasedness. The time complexity to process an arriving item and adjust the estimates of the retained items is a crucial determinant for the computational feasibility of stream aggregation. Fixed size summaries are essential in cases where the stream load can vary significantly over time and is not otherwise controlled. A prime example comes from Internet traffic measurement, where the offered load can varying significantly both due to time-of-day variation, and due to exogenous events such as routing changes. Reservoir sampling acts to adapt the sampling to variations in the rate of arriving items, e.g. to take a periodic fixed size sample per router interface.

Order sampling has been proposed as a mechanism to implement uniform and weighted reservoir sampling in the special case that items have unique keys \cite{hajek:1960}. In order sampling, all sampling decisions depend on a family of random order variables generated independently for each arriving item. For arrival at a full reservoir of capacity $m$, from the $m+1$ candidate items (those currently in the reservoir and the arriving item) the item of lowest order is discarded. Several order sampling schemes have been proposed to fulfill different weighted sampling objectives; including Probability Proportional to Size (PPS) sampling \cite{rosen1997a} also known as Priority Sampling \cite{DLT:jacm07}, and Weighted Sampling without Replacement \cite{Rosen1972:successive,ES:IPL2006,bottomk07:ds}.
Stream order sampling can be implemented as a priority queue in increasing order \cite{DLT:jacm07}. While order sampling can be applied directly to an unaggregated stream and samples aggregated post-sampling, this is clearly wasteful of resources.

\subsection{Contribution and Summary of Results}\label{sec:cont}

This paper proposes Priority-Based Aggregation (\psh), a new sampling-based algorithm for stream aggregation built upon order sampling that can provide unbiased estimates of the per key aggregates. \psh\ and its variants provide greater accuracy across a variety of heavy hitter and subpopulation queries than competitive methods in data driven evaluations. Our specific contributions are as follows:
\begin{trivlist}
\item  \textsl{Estimation Accuracy.} \psh\ is a weighted sampling algorithm developed from Priority Sampling that yields a stream summary in the form of unbiased estimates of all aggregates in the stream. A modification of \psh\ uses biased estimation to reduce error for smaller aggregates, while having a negligible impact on accuracy for larger aggregates. In experimental comparisons with a comparable sampling based method, Adaptive Sample and Hold \cite{Estan2002,Cohen:2015:SSF:2783258.2783279}, our methods reduced weighted relative estimation error over all keys by between 38\% and 65\% at sampling rates between 5\%  and 17\%  when applied to synthetic and network traffic traces. The accuracy for rank queries was also improved.
\item\textsl{Computational Complexity.}
To the best of our knowledge, \psh\ is the first algorithm to employ order sampling based on a single random variable per key in the context of stream aggregation.
This enables \psh\ to achieve low computational complexity for updates. It is average $O(1)$ to process each arrival that is either added to a current aggregate, or that presents a new key that is not selected for sampling. The exception comes when an arriving key not currently in storage replaces an existing key; the complexity of this step is worst case $O(\log m)$ in a reservoir of capacity $m$. Retrieval of the estimates is $O(1)$ per key.
\item\textsl{Priority-Based Adaptive Sample and Hold} (\papsh). We incorporate the well known weighted Sample and Hold \cite{Estan2002} algorithm as a pre-sampling stage, for which the sampling probabilities are controlled from the adaptation of the \psh\ second stage. This enables us to exploit the computational simplicity of the original (unadaptive) Sample and Hold algorithm while taking advantage of the relatively low computational adaptation costs of \psh, as compared with existing versions of Adaptive Sample and Hold \cite{KME:sigmetrics05,Cohen:2015:SSF:2783258.2783279}. 
\end{trivlist}

The outline of the rest of paper is as follows. In Section~\ref{sec:related} we review related work to give a more detailed motivation for our approach and set the scene for our later experimental evaluations. Section~\ref{sec:framework} describes the \psh\ algorithm and establishes unbiasedness of the corresponding estimators. 
Section~\ref{sec:optimization} describes four optimizations of these basic algorithms. Section~\ref{sec:deferred} describes \textsl{Deferred Update} for which we show that the unbiasing of estimates that must be performed on all aggregates after another is discarded can be deferred, for each such aggregate until an item with matching key arrives. Section~\ref{sec:preagg} describes pre-aggregation of successive items with the same key in the input stream. Section~\ref{sec:pash} describes the use of Sample and Hold as an initial sampling stage, and how its adaptation is controlled from \psh. Section~\ref{sec:error} describes a scheme to reduce estimation errors for small aggregates through the introduction of bias. 
Section~\ref{sec:structure} specifies the algorithm incorporating these optimizations, describes our implementation, and reports on computational and space complexity. Section~\ref{sec:eval} describes data driven evaluations, before we conclude in Section~\ref{sec:conclude}. Proofs are deferred to Section~\ref{sec:proofs}.

\section{Related Work}\label{sec:related}

In the earliest work in reservoir sampling $k$ items from a stream of distinct keys \cite{Vitter:85}, the $n^{\textrm th}$ item is chosen with probability $1/n$, giving rise to a uniform sample. To approximately count occurrences in a stream with repeated keys, Concise Samples \cite{GM:sigmod98} used uniform sampling, maintaining a count of sampled keys. In network measurement Sampled NetFlow \cite{netflow} takes a similar approach maintaining an aggregate of weights rather than counts. In Counting Samples \cite{GM:sigmod98}, previously unsampled keys are sampled with a certain probability, and if selected, all matching keys increment the key counter with probability 1. Sample and Hold \cite{Estan2002} is a weighted version of the same approach. 
Both schemes can be extended to adapt to  a fixed cache size, by decreasing the sampling probability and resampling all current items until one or more is ejected.
The set of keys cached by \ash\ is a PPSWR sample (sampling probability proportional to size without replacement), also known as bottom-k (order) sampling with exponentially distributed ranks \cite{bottomk07:ds,bottomk:VLDB2008,Rosen1972:successive}.
The comparisons of this paper use the form of ASH for Frequency Cap Statistics from \cite{Cohen:2015:SSF:2783258.2783279}, applied in the case of unbounded cap; see also an equivalent form in \cite{cohen2012don}. The number of deletion steps from a reservoir of size $m$ in  a stream of length $n$  is $O(n \log m)$ and each such deletion step must process $O(m)$ items, based on generation of new randomizers variables for each item. By contrast, \psh\ requires only a single randomizer per key, and is able to maintain items in a priority queue from which discard cost in only $O(\log m)$. Concerning memory usage, \psh\ requires maintenance of larger working storage per item, while the implementation of \ash\ in \cite{Cohen:2015:SSF:2783258.2783279} temporarily requires a similar amount during the discard step. Final storage requirements are the same. Step Sampling \cite{Cohen:2007:AEA:1298306.1298344} is a related approach in which intermediate aggregates are exported. 

Beyond sampling, many linear sketching approaches have been proposed; see e.g. \cite{AMS99,stable-sk,Johnson1986,Cormode04animproved,Gilbert:2002:FSA:509907.509966}.
More recently, $L_p$ methods  have been proposed in which each key is sampled with probability proportional to a power of its weight \cite{andoni,MoWo:SODA2010,Jowhari:Saglam:Tardos:11}. A general approach to sketch frequency statistics in a single pass is proposed in \cite{Braverman:2010:ZFL:1806689.1806729}, with applications to network measurement in \cite{Liu:2016:OSR:2934872.2934906}.
A drawback of sketch methods is that for a given accuracy, their space is logarithmic in the size of the key domain, which can be problematic for large domains such as IP addresses. Retrieval of the full set of aggregates (as opposed to query on specific keys) is costly, requiring enumerating the entire domain for each sketch; tuning of the sketch for specific queries, e.g., using dyadic ranges, is preferable. In our case, the full summary can be read directly in $O(m)$ time.
Space factors in the sketch-based methods also grow polynomially with the inverse of the bias, whereas our method enables unbiased estimation.
Beyond these comments, we do not perform an explicit comparison with sketch-based methods, instead referring the reader to a comparative evaluation of sketches with \ash\ for subpopulation queries in \cite{cohen2012don}.

Finally, weighted reservoir priority sampling from graph streams of unique edges has recently been developed in \cite{AhmedVLDB2017}, building on the conditionally independent edge sampling \cite{ahmed2014gsh}.

\section{Priority-Based Aggregation}\label{sec:framework}

\subsection{Preliminaries on Priority Sampling}
Priority Sampling $m$ items from a set of $n>m$ weights $\{x_i:\
i\in[n]\}$ is accomplished as follows. For each item $i$ generate
$u_i$ uniformly in $(0,1]$, and compute its priority
$r_i=x_i/u_i$. Retain the (random) top $m$ priority items, and for each such
item define the estimate $\hat x_i = \max\{x_i,z\}$, where $z$ is the
$(m+1)^{\textrm {st}}$ largest priority. For the remaining $n-m$ items
define $\hat x_i=0$. Then for each $i$, $\E[\hat x_i]=x_i$ where the
expectation is taken of the distribution of the $\{u_i:
i\in[n]\}$. Priority sampling can be implemented as reservoir streams sampling,
taking the first $m$ items, then processing the remaining $n-m$ items
in turn, provisionally adding each to the reservoir then using the above algorithm to discard one item.

\subsection{Algorithm Description}
We consider a stream of items $\{(k_t,x_t)\}_{t\in T }$ where
$T=[|T|]=\{1,2,\ldots,|T|\} \subset\Nl$.
$x_t>0$ is a \textbf{size} and $k$ a \textbf{key} that is a member
of some keyset $K$. Let
\be
X_{k,t}=\sum_{s\le t, k_s = k} x_s
\ee
denote the total size of items with key
$k$ arriving up
to time $t$ whose key is $k$. Let $K_t$ denote the set of unique
keys arriving up to and including time $t$. 
We aim to construct a fixed size random summary $\{\hat X_{k,t}: k\in \hat K_t\}$ where
$\hat K_t\subset K_t$ with $|\hat K_t|\le m$ which provides unbiased
estimates over \textsl{all} of $K_t$
$\E[\hat X_{k,t}]=X_{k,t}$ for all $k\in K_t$. Implicitly $\hat
X_{k,t}=0$ for $k\notin\hat K_t$.

To accomplish our goal we extend Priority Sampling to include
aggregation over repeated keys.
Sampling will be
controlled by a family of weights $\{W_{k,t}: k\in \hat
K_t\}$. These generalize the usual fixed weights of priority sampling in that
they can be both random and time dependent, although within certain
constraints that we will specify. The arrival $(k,x)=(k_t,x_t)$ is processed  as follows:
\begin{enumerate}[leftmargin=0pt,itemindent=15pt,itemsep=5pt]
\item
If the arriving key is in the reservoir, $k\in \hat K_{t-1}$ then we increase $X_{k,t}=X_{k,t-1}+x$,
leave the sample keyset  unchanged,
$\hat K_t = \hat K_{t-1}$,
and await the next arrival.
\item If the arriving key is not in the reservoir, $k\notin \hat K_{t-1}$, then we provisionally admit $k$ to
  the sample set forming $\hat K'_t=\hat K_{t-1}\cup\{k\}$. We initialize
$\hat X_{k,t}$ to $x$, $q_k$ to $1$, and generate 
the random $u_{k}$ uniformly on $(0,1]$. Then:
\begin{enumerate}[label=(\alph*),leftmargin=15pt,itemindent=15pt,topsep=5pt]
\item If $|\hat K|\le m$ we
set $\hat K_t = \hat K'_t$ and await the next arrival.
\item Otherwise $|\hat K|>m$, we discard the key
\[
 k^*=\textstyle{\argmin_{k'\in\hat  K_t}} W_{k',t}/u_{k'}
\]  from $\hat K'_t$ and set
$z^*=W_{k^*,t}/u_{k^*}$. For each remaining $k'\in \hat K$ set
$q_{k',t}=\min\{q_{k',t-1},W_{k,t}/z^*\}$ and $ \hat X_{k',t}=\hat
X_{k',t-1} q_{k,t-1}/q_{k',t}$.
\end{enumerate}
\end{enumerate}
While the description above is convenient for
mathematical analysis, we defer a formal specification to Section~\ref{sec:structure},
where Algorithms~\ref{alg4} and \ref{alg-psh-com} incorporate
optimizations described in Section~\ref{sec:optimization} that improve
performance relative to a literal implementation of steps (1), (2),
(2a), (2b) above.

\subsection{Unbiased Estimation} 

We now establish unbiasedness of $\hat X_{k,t}$ 
when $W_{k,t}$ is the cumulative increase in the size in $k$ 
since $k$ was last admitted to
the sample. 
For each key $k$ 
let $T_k$ denote the set of times $t$ at which $k$ was admitted to a
full reservoir, i.e., 
\be
T_k=\{t: k\ne \hat K_{t-1},\ k\in \hat K_t, |\hat K_{t-1}|=m\}
\ee
When $k\in\hat K_{t-1}$, let $\tau_{k,t}=\max[0,t-1]\cap T_k$
denote the most recent time prior to $t$ at which $k$ was admitted to the
reservoir, and for the arriving key $k_t$ we set $\tau_{k_t,t}=t$
prior to admission. 

Let $T^0=\{t: k_t\notin \hat K_{t-1}\}\subseteq T$ denote the times at
which the arriving key was not in the current sample. 
Let $\tau_t=\max [0,t-1]\cap T^0$ denote the most recent time prior to $t$ that
an arriving key was not the sample.  For an integer interval $Y$ we
will use the notation  $Y^0=T^0\cap Y$. 
For any $t\in T$ and $k\in \hat K'_t$, 
$u_k$ was generated at time $\tau_{k,t}$.
If $k$ is discarded from $\hat K'_t$, a
subsequent arrival of $k$ in an item will have a new independent $u_k$
generated. 

Our first version of \psh\ is governed by the exact weights $W_{k,t}$
that the total size in
key $k$ of arrivals since $k$ was most recently admitted to
sample, i.e.,
\be
W_{k,t}=X_{k,t}-X_{k,\tau_{k,t}-1}=\sum_{s\in[\tau_{k,t},t]:k_s=k}x_s
\ee
Note that $W_{k,t}$ can be maintained in the sample set by
accumulation. For each $t\in T^0$ and $i\in\hat K'_t$ let 
\be\label{eq:z}
z_{i,t}= \min_{j\in \hat K'_t\setminus\{i\}}\frac{W_{j,t}}{u_j}.
\ee
and $z_s$ denote the unrestricted minimum $z_s=\min_{j\in\hat K'_t}\frac{W_{j,t}}{u_j}$. 
The conditions under which $i\in\hat K'_t$ survives sampling are
\be\label{eq:set:iter}
\{i\in\hat K_t\}=\{i \in \hat K'_t\}\cap \{W_{i,t}/u_i > z_{i,t}\}\ee 
As a consequence $z_{i,s}=z_s$ if $i\in \hat K_s$.
For $t\in T^0$ define \be\label{eq:q} 
q_{k,t}= \min\{1,\min_{s\in [\tau_{k,t},t]^0}W_{k,s}/z_{s}\}
\ee
and 
\be\label{eq:Qdef}
Q_{k,t}=\left\{
\begin{array}{ll}
q_{k,t}& \mbox{if }k=k_t\\
q_{k,t}/q_{k,\tau_t},&\mbox{otherwise}\\
\end{array}
\right.
\ee
For $k\in K_t$, define $\hat X_{k,t}$ iteratively by
\be\label{eq:iter}
\hat X_{k,t}=\left\{
\begin{array}{ll}
\hat X_{k,t-1} + \delta_{k,k_t} x_t&\mbox{$t\notin T^0$}\\
(\hat X_{k,t-1} + \delta_{k,k_t} x_t)/Q_{k,t} & \mbox{$t\in T^0$,
                                                $k\in\hat K_t$}\\
0,&\mbox{otherwise}\\
\end{array}
\right.
\ee
where $\delta_{i,j}=1$ if $i=j$ and $0$ otherwise.
The proof of the unbiasedness of $\hat X_{k,t}$ is deferred to Section~\ref{sec:proofs}.
\begin{theorem}\label{thm:simple}
$\hat X_{k,t}$ is unbiased: $\E[\hat X_{k,t}]=X_{k,t}$.
\end{theorem}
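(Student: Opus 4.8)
The plan is to establish $\E[\hat X_{k,t}]=X_{k,t}$ by induction on $t$, with the elementary Priority-Sampling identity $\max\{w,z\}\cdot\min\{1,w/z\}=w$ doing the real work at each sampling step. Fix the key $k$, and condition on the randomizers generated for \emph{all other} keys: with those fixed, every threshold $z_{i,s}$ ($i\neq k$) and every leave-one-out threshold $z_{k,s}$ of~\eqref{eq:z} is deterministic, and $\hat X_{k,t}$ is a piecewise-constant function of the randomizer(s) that $k$ is itself assigned --- a fresh uniform on $(0,1]$ each time $k$ is provisionally admitted. The inductive claim is that this conditional expectation equals the deterministic value $X_{k,t}$; the theorem follows since the statement is then unconditional too.

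For the inductive step I would follow the three branches of~\eqref{eq:iter}. If $t\notin T^0$ the update is deterministic, $\hat X_{k,t}=\hat X_{k,t-1}+\delta_{k,k_t}x_t$, matching $X_{k,t}=X_{k,t-1}+\delta_{k,k_t}x_t$. If $t\in T^0$ but the reservoir is not yet full there is no discard, the update is again deterministic, and a newly admitted key starts with $q=1$, consistent with its randomizer being uniform on all of $(0,1]$. The substantive case is $t\in T^0$ with $|\hat K_{t-1}|=m$, where a fresh $u_{k_t}$ is drawn and exactly one key leaves. Two facts carry this step. First, the observation recorded just after~\eqref{eq:set:iter}: a surviving key is never the discarded $\argmin$, hence $z_{k,s}=z_s$ throughout $k$'s sojourn and the running minimum $q_{k,t}$ of~\eqref{eq:q} is unaffected by $u_k$ on the survival event. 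Second, conditioned on the other keys' randomizers and on $k$'s earlier randomizers, the leave-one-out threshold at an admission time $s$ of $k$ is already pinned down, so $k$ clears the $\argmin$ test iff its new randomizer is below $W_{k,s}/z_{k,s}$ (probability $\min\{1,W_{k,s}/z_{k,s}\}$), on that event $z_s=z_{k,s}$, and the reinitialized estimate equals $W_{k,s}/q_{k,s}=\max\{W_{k,s},z_{k,s}\}$; the Priority-Sampling identity then makes that arrival contribute exactly $W_{k,s}$ in expectation. Feeding these into~\eqref{eq:Qdef}--\eqref{eq:iter} and summing over all arrivals of $k$ up to time $t$, the per-arrival contributions telescope to $\sum_{s\le t,\,k_s=k}x_s=X_{k,t}$; $Q_{k,t}$ is designed so that each telescoping is exact, and the minimum over $[\tau_{k,t},t]^0$ in~\eqref{eq:q} is precisely what tracks the conditional support of $u_k$ on $\{k\in\hat K_t\}$.

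The hard part will be this bookkeeping, not any individual identity. Because $u_k$ is generated at admission but has to be integrated out, $\hat X_{k,t}$ is \emph{not} a martingale for the natural time filtration, so one cannot process one time step at a time: one must average over all of $k$'s randomizers jointly, and how many there are depends on the sampling history (a new one at each re-entry of $k$, with the intervening dynamics branching on the earlier draws). The compensations are global in a subtle way. Items arriving while $k$ is only briefly and provisionally present contribute nothing to the estimate at that instant (it is momentarily $0$), their weight being recovered only in expectation via the over-estimate $\max\{W_{k,s},z\}\ge W_{k,s}$ on the complementary branch. And the per-key balance at a discard emerges only after averaging over $u_k$ itself, not merely the fresh $u_{k_t}$: on small-$u_k$ branches $k$ survives and its estimate is scaled up (sometimes because a \emph{different} incumbent was ejected), on large-$u_k$ branches $k$ is the one ejected and its estimate vanishes, and these contributions cancel exactly. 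Verifying that every such cancellation is exact --- across all of $k$'s sojourns, through the running minima of~\eqref{eq:q}, and jointly with the behavior of the other keys --- is where the care is needed, and is exactly what the somewhat intricate definitions~\eqref{eq:z}--\eqref{eq:iter} are built to secure; with those relations in hand, the expectation at each step collapses to the one-line Priority-Sampling computation above.
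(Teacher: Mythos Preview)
Your proposal is correct and rests on the same two pillars as the paper's proof: (i) on the survival event the leave-one-out threshold $z_{k,s}$ coincides with the realized threshold $z_s$, and (ii) conditional on those thresholds the survival probability of $k$ is exactly the running minimum $q_{k,t}$ of~\eqref{eq:q}, so that the division by $Q_{k,t}$ in~\eqref{eq:iter} is inverse-probability weighting. Where you differ is in the bookkeeping. You propose to condition on all other keys' randomizers and then integrate over $k$'s randomizer(s) globally, telescoping per-arrival contributions, and you correctly flag that this is delicate because the number and timing of $k$'s randomizers depend on the branching history. The paper sidesteps this entirely by a one-step argument: it introduces the covering partition $\{A_k(s)B_k(s,t-1): s\in[s_k,t-1]\}\cup\{A_k(t)\}$ indexed by $k$'s most recent admission time (where $A_k(s)=\{k\notin\hat K_{s-1}\}$ and $B_k(s,s')=\{k\in\hat K_s,\dots,\hat K_{s'}\}$), conditions further on the leave-one-out thresholds $\cZ_k(t,s)=\{z_{k,r}:r\in[s,t]^0\}$, and establishes the increment identity $\E[\hat X_{k,t}\mid \hat X_{k,t-1},C]-X_{k,t}=\hat X_{k,t-1}-X_{k,t-1}$ on each member $C$. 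The step you thought impossible --- processing one time step at a time --- is exactly what the paper does; the trick is that the conditioning $\sigma$-field contains $\cZ_k(t,s)$ but \emph{not} $u_k$ itself, so the one-step conditional survival probability $\Pr[k\in\hat K_t\mid B_k(s,t-1)A_k(s),\cZ_k(t,s)]=q_{k,t}/q_{k,\tau_t}=Q_{k,t}$ falls out as a ratio of two nested-interval probabilities over the single variable $u_k$. Your global telescoping works too, but the paper's partition-and-ratio device makes the induction literally one step long and dissolves the joint integration you flagged as the hard part.
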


We have also proved that replacing $W_{k,t}$ with an affine function
of the current estimator $\hat X_{k,t}$ also yields an unbiased
estimator at the next time slot. This has the utility of reducing
memory usage since a separate $W_{k,t}$ per aggregate is not
needed. However, we also found in experiments that this estimator was
not so accurate. For both variants of the estimator, we can derive
unbiased estimators of $\var(\hat X_{k,t})$, These can be used to
establish confidence intervals for the estimates. Due to space
limitations we omit further details on all the results summarized
in this paragraph.

\section{Optimizations}\label{sec:optimization}

\subsection{Deferred Update}\label{sec:deferred}

For each $i$, $q_{i,t}$ is computed as the minimum over $s$ of
$W_{i,s}/z_s$.  As it stands, this is more complex that the
corresponding computation in Priority Sampling for fixed weights
$W_i$, where $W_i/z^*_t$ is computed
once for each arrival.  By comparison, it appears that in principle,
we must update $q_{i,t}$ for all $i\in K_t$ at each $t\in T^0$. We now establish that for each key $k$, $q_{k,t}$ needs only be updated when an item with key $k$ arrival, i.e.., at $t$ for which $k_t=k$. Updates for times $t$ in $T^0$ for which $k_t\ne k_t$ can be deferred until the first time $t'>t$ for which $k_{t'}=k$, or whenever an estimate of $\hat X_{k,t}$ needs to be computed. This property is due to the constancy of the fixed weights between updates and the monotonicity of the sequence $z^*_t$.
For $t\in T^0$ let $z^*_t=\max_{s\in[0,t]^0}\{z_s\}$.

Let $d_t$ denote the key that is discarded from $\hat K'_{t-1}$ at time
$t\in T^0$, i.e., $\{d_t\}=\hat K'_{t-1}\setminus \hat K_t$. When $t\in T^0$ and 
$i\in \hat K_t$ define $q^*_{k,t}$ recursively by
\be
q^*_{i,t}=\min\{q^*_{i,\tau_t},W_{i,t}/z^*_{t}\}
\ee\label{eq:iter:pstar}
unless $k_t=i$ in which case $q^*_{i,t}=\min\{1,W_{i,t}/z^*_t\}$. The proof of the following result is detailed in Section~\ref{sec:proofs}.

\begin{theorem}\label{thm:delay-alt}
\begin{itemize}
\item[(i)] $t\in T$  implies $z^*_t=z_t$. 
\item[(ii)] $q_{i,t}=q^*_{i,t}$ for all $t$ where these are defined..
\end{itemize}
\end{theorem}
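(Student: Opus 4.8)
The plan is to prove (i) first and then deduce (ii) from it by a routine induction on $t$: the recursion defining $q^*$ differs from the closed form (\ref{eq:q}) for $q$ only by substituting the running maximum $z^*_t$ for the instantaneous threshold $z_s$, so once (i) identifies these two quantities the two recursions coincide.

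For (i), note that $z^*_t$ is by construction the running maximum of $\{z_s:s\in[0,t]^0\}$, so $z^*_t=z_t$ is exactly the assertion that the threshold sequence $(z_s)_{s\in T^0}$ is non-decreasing. I would argue by induction over successive elements of $T^0$, using two structural facts: (a) between consecutive times of $T^0$ the reservoir keyset is frozen, $\hat K_s=\hat K_{\tau_t}$ for $\tau_t\le s<t$, since admissions and discards occur only at $T^0$ times; and (b) each resident weight $W_{j,s}$ is non-decreasing in $s$ by accumulation. Granting the inductive hypothesis $z^*_{\tau_t}=z_{\tau_t}$, it suffices to show $z_t\ge z_{\tau_t}$: by (\ref{eq:set:iter}) every $j\in\hat K_{\tau_t}$ has priority $W_{j,\tau_t}/u_j>z_{\tau_t}$, hence also $W_{j,t}/u_j>z_{\tau_t}$ by (b), so if the key discarded at time $t$ is one of these residents its priority --- which equals $z_t$ --- exceeds $z_{\tau_t}$. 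The remaining case, in which the arriving key $k_t$ is itself discarded at $t$, must be treated separately: there the reservoir is unchanged, and either $z_t\ge z_{\tau_t}$ already or $z_t$ is so small that $W_{j,t}/z_t$ dominates $q_{j,t-1}$ for every retained $j$, so that no retained estimator is modified and $t$ yields no new record for $z^*$. The bulk of the work, and what I expect to be the main obstacle, is exactly this adaptation of the classical priority-sampling monotonicity argument --- stated for fixed item weights --- to weights that accumulate over the lifetime of a key, together with keeping precise track of which $T^0$ times actually serve as thresholds.

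For (ii), granting (i), a short induction on $t$ finishes the proof. At a fresh admission of $k$ both $q_{k,\cdot}$ and $q^*_{k,\cdot}$ equal $\min\{1,W_{k,t}/z_t\}=\min\{1,W_{k,t}/z^*_t\}$; at a later $t\in T^0$ with $k\in\hat K_t$, unfolding (\ref{eq:q}) over the split $[\tau_{k,t},t]^0=[\tau_{k,t},\tau_t]^0\cup\{t\}$ gives $q_{k,t}=\min\{q_{k,\tau_t},W_{k,t}/z_t\}$, which by the induction hypothesis together with (i) equals $\min\{q^*_{k,\tau_t},W_{k,t}/z^*_t\}$ --- the defining recursion for $q^*$. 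The deferral highlighted in the section then follows at once: since $W_{k,\cdot}$ changes only at matching arrivals of $k$, and such arrivals are not in $T^0$, the weight is constant across the whole block of $T^0$ times lying between two matching arrivals, so that block of updates of $q_k$ collapses to the single evaluation $\min\{q^*_{k,t_0},W_{k,t}/z^*_t\}$, where $t_0$ is the last matching arrival (or admission) of $k$.
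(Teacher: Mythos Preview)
Your plan is essentially the paper's, resting on the same two facts: weights $W_{j,\cdot}$ are non-decreasing while a key is resident, and any key that survives a selection step has priority exceeding the threshold at that step. The organization differs slightly. For (i) the paper traces the discarded key $d_t$ back to its own admission time $\tau_{d_t,t}$, obtaining $z_t\ge z_s$ for every $s\in[\tau_{d_t,t},t]^0$ in one stroke via the chain $z_t=W_{d_t,t}/u_{d_t}\ge W_{d_t,s}/u_{d_t}\ge z_s$, and then recurses on the key discarded at $\tau_{d_t,t}$; you instead induct one $T^0$-step at a time, showing only $z_t\ge z_{\tau_t}$. Both work.

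Your explicit treatment of the rejection case $k_t=d_t$ is a point where you are in fact more careful than the paper's statement. The paper's proof of (i) opens by assuming $k_t\ne d_t$ (an admission), so (i) as proved there really covers only admission times; the residual case is then absorbed into the proof of (ii) by the dichotomy $z_t>z^*_{\tau_t}$ versus $z_t\le z^*_{\tau_t}$, the latter branch showing directly that neither $q$ nor $q^*$ is altered at such a $t$. Your observation that a rejected arrival leaves $z^*$ and every retained $q_j$ unchanged is exactly this second branch. Just make sure, when you write out (ii), that you invoke this rejection-case argument at the point where you replace $z_t$ by $z^*_t$, rather than appealing to (i) unconditionally; as written, your inductive step for (ii) cites (i) without qualification, which would be circular at a rejection time.
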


Theorem~\ref{thm:delay-alt} enables
computational speedup as compared with updating each key probability
at each $t\in T^0$.   Since $z^*_t$ is monotonic in $t$, we only need
to update the probabilities $q_{i,t}$ for links $i$ whose weight
increases after admitting a key at time $t$.  Likewise, we perform a final update
at the end of the stream, or at any intermediate time when an
estimate is required.

\subsection{Pre-aggregation}\label{sec:preagg}

Pre-aggregation entails summing weights over consecutive instances of
the same key before passing to \psh. Pre-aggregation saves on computational complexity of updating priorities, instead of updating a single counter. This also results in an unbiased estimator whose variance at least as large as \psh.

\subsection{Priority-Based Adaptive Sample and Hold}\label{sec:pash}

Sample and Hold \cite{Estan2002} with a fixed parameter is a simple
method to preferentially accumulate large aggregates. However, in this
form, Sample and Hold cannot adapt to variable load or a fixed buffer.
Adaptive Sample and Hold (ASH) \cite{Estan2002,KME:sigmetrics05}
using resamples to selectively discard from the reservoir.  We
propose to retain the advantages of Sample and Hold within an
adaptive framework by using it as a front end to \psh, with its sampling parameters adapted directly from the time-varying threshold of \psh.

We call this coupled system Priority-Based Adaptive Sample and Hold
(\papsh). When an arriving item $(k,x)$ finds its key $k$ is not in the
current sample $\hat K_t$, the item is sampled with probability
$p_t(x)=\min\{1,w/z_t^*\}$ where the current threshold
$z^*_t$  provides scale that takes into account
the current retention probabilities for items in the reservoir. 
In order to preserve unbiasedness, the
weight of any such item is normalized to $x/p_t(x)=\max\{x,z^*_t\}$.
Subsequent items in the aggregate that find their key already stored
are selected with probability $1$ and their sizes passed to \psh\
without any such initial normalization. 
Unbiasedness of the final estimate then follows from the chain rule for condition expectations
(see e.g \cite{W91}) since \psh\ provides an unbiased estimate of the
unbiased estimate produced by the \ash\ stage.
We note that \ash\ pre-sampler uses the \psh\ data structure to
determine whether a key is in storage. All key insertion and deletions
are handled by \psh\ component. We specify \papsh\ formally in
Algorithm~\ref{alg-psh-com} of Section~\ref{sec:structure}

\subsection{Trading Bias for MSE: Error Filtering} \label{sec:error}
Unbiased estimation of aggregates is effective for
larger aggregates since averaging over estimated contributions to the
aggregate reduces error. Smaller aggregates do not enjoy this
property, motivating supplementary approaches to reduce error. A
strawman approach is to count the number of estimates terms in the
aggregate, and use this value as a criterion to adjust or exclude
small aggregates. Another strawman approach filters based on
estimated variance, excluding aggregates with a high estimated
relative variance. The disadvantage of these approaches 
is that they require another counter. Instead, we are drawn to
find mechanisms to accomplish this goal that do not require extra
storage.

Our approach is quite simple: we ignore the contribution of the first item of every
newly instantiated aggregate to its estimate, although in
all other respects, sampling proceeds as before. Thus, while the renormalized
item weight does not contribute to the aggregate estimator $\hat X_k$,
the unnormalized item weight does contribute to $W_k$ used in
Theorem~\ref{thm:simple}. The resulting estimator is clearly biased since it
underestimates the true aggregate on average, but reduces  
as the experiments reported in Section~\ref{sec:eval} will
show.

\section{Algorithms and Implementation} \label{sec:structure}

\subsection{Algorithm Details}\label{sec-algorithm}

\begin{table}
{\small
\begin{tabular}{|l|l|l|}
\hline
Abbrev. & Description & Reference \\
\hline
\psh& Priority-Based Aggregation & Alg.~\ref{alg4} \\
\psh-EF& \psh\ w/ Error Filtering & Alg.~\ref{alg4} \\
\papsh& Priority-Based Adaptive Sample \& Hold &Alg.~\ref{alg-psh-com} \\
\papsh-EF & \papsh\ w/  Error Filtering & Alg.~\ref{alg-psh-com} \\
ASH & Adaptive Sample \& Hold & \cite{KME:sigmetrics05,Cohen:2015:SSF:2783258.2783279}\\
SH & Sample \& Hold (Non-Adaptive) & \cite{Estan2002}\\
\hline
\end{tabular}
}
\caption{Nomenclature for Algorithms}\label{tab:nomen}
\end{table}

The family of \psh\ algorithms using true weights is described in
Algorithm~\ref{alg4}. (Our nomenclature for the
Algorithms in given in
Table~\ref{tab:nomen}).
Pre-aggregation over consecutive items bearing
the same key (see Section~\ref{sec:preagg}) takes place in
lines~\ref{line:s1}--\ref{line:s2}. The pre-aggregates are passed to the
main loop in line ~\ref{line:s3}. In the main loop, deferred update
(Section~\ref{sec:deferred} takes place before aggregation to an
existing key in lines~\ref{alg:m5}--\ref{alg:m6}. Otherwise, a new key
entry is instantiated in lines~\ref{alg:m7}--\ref{alg:m8}. With error
filtering (Section~\ref{sec:error}), the first update of the
estimate is omitted at line~\ref{alg:erf}. When a new key arrives at the full
reservoir, selection of a key for discard takes place in
lines~\ref{alg:m9}-\ref{alg:m10}. In our implementation, we break this
step down further. The aggregates are maintained in a priority queue
implemented as a heap. An incoming new key is rejected if
its priority is less than the current minimum priority; see
Section~\ref{sec:costs}. After the
stream has been processed, remaining deferred
updates to the estimates occur in lines
\ref{alg:m4}--\ref{alg:m4:2}. This step could also be performed for any
or all aggregates in response to a query.
Algorithm~\ref{alg-psh-com} describes the modifications to the main
  loop for \papsh. A new pre-aggregate key is instantiated only if it
  passes the Sample and Hold admission test at line (\ref{alg:m11}).

\def\told{_{\mathrm{old}}}
\def\tnew{_{\mathrm{new}}}
\def\ttot{_{\mathrm{tot}}}
\begin{algorithm}
\setstretch{0.85}
\SetKwInOut{Input}{Input}
    \SetKwInOut{Output}{Output}
    \SetKwFunction{update}{update}
\SetKwFunction{mainloop}{mainloop}
\SetKwFunction{pshtrue}{PBA}
\SetKwFunction{mainlooptruew}{mainloop\_TrueW}
\SetKwProg{myproc}{Procedure}{}{}
  \caption{\psh: Priority-Based Aggregation w/ Optional
  Error Filtering}
  \label{alg4}
\Input{Stream of keyed weights $(k,x)$}
\Output{Estimated keyed weights $\{(k,a(k)): k\in K\}$}
\SetVline \dontprintsemicolon 
\BlankLine
\myproc{\pshtrue{$m$}}{
  $K=\emptyset$; $z^*=0$; $k\told = $ first key $k$ \label{line:s1}\\
  \While{(new keyed weight $(k\tnew,x\tnew)$)}{ 
    \If{($k\tnew = k\told$)} {
    $x\ttot \mathrel{+}= x\tnew$
  }
  \Else{
      \mainloop{$k\told,x\ttot$}\\
      $k\told = k\tnew$;  $x\ttot = x\tnew$ \label{line:s2}
    }
  }
  \mainloop{$k\tnew,x\ttot$}\\ \label{line:s3}
  \ForEach{$(k'\in K)$} { \label{alg:m4}
    \update{$k',z^*$} \\
  } \label{alg:m4:2}
end\\
}
\SetVline \dontprintsemicolon 
\hrule
\myproc{\mainloop{$k,x$}}{
  \If{($k\in K$)} {
    \update{$k,z^*$}\\ \label{alg:m5}
    $a(k)\mathrel{+}=x$; $w(k)\mathrel{+}=x$ \label{alg:m6}\\
    break
  }
  $K=K\cup\{k\}$; $w(k)=x$; $q(k)=1$\\ \label{alg:m7}
  $a(k)=x$; \tcp*{Omit if Error Filter} \label{alg:erf}
  generate $u(k)$ uniformly in $(0,1]$\\ \label{alg:m8}
  \If{($|K|\le m$)}{
    break
  }
  $k^*=\argmin_{k'\in K}\{w(k')/u(k')\}$\\ \label{alg:m9}
  $z^*=\max\{z^*, w(k^*)/u(k^*)\}$\\
  $K=K\setminus\{k^*\}$; \\ \label{alg:m10}
  Delete $a(k^*)$, $u(k^*)$, $q(k^*)$, $w(k^*)$\\
}
\hrule
\myproc{\update{$\tilde k, \tilde z$}}{
  $a(\tilde k) = a(\tilde k) * q(\tilde k)$\\
  $q(\tilde k)=\min\{q(\tilde k),w(\tilde k)/\tilde z\}$\\
  $a(\tilde k) = a(\tilde k) / q(\tilde k)$\\
}
\end{algorithm}

\begin{algorithm}
\setstretch{0.85}
	\SetKwInOut{Input}{Input}
	\SetKwInOut{Output}{Output}
	\caption{Priority-Based Adaptive Sample and Hold \papsh\ w/ Optional Error Filtering; mainloop only}
	\label{alg-psh-com}
	\SetKwFunction{update}{update}
	\SetKwFunction{mainloop}{mainloop}
	\SetKwProg{myproc}{Procedure}{}{}
	\SetVline \dontprintsemicolon 
	\BlankLine
	\myproc{\mainloop{$k,x$}}{
		\If{($k\in K$)} {
			\update{$k,z^*$}\\
			$a(k)\mathrel{+}=x$; $w(k)\mathrel{+}=x$\\
			break
		}
		Generate $r$ uniformly in $(0,1]$\\
		\If{$r<min(1,x/z^*)$}{ \label{alg:m11}
			$K=K\cup\{k\}$\\
			$a(k)=\max(x,z^*)$\tcp*{Omit if Error Filter}
			$w(k)=x$\\
			$q(k)=1$\\
			generate $u(k)$ uniformly in $(0,1]$\\
		}
		\If{($|K|\le m$)}{
			break
		}
		$k^*=\argmin_{k'\in K}\{w(k')/u(k')\}$\\
		$z^*=\max\{z^*, w(k^*)/u(k^*)\}$\\
		$K=K\setminus\{k^*\}$; \\
		Delete $a(k^*)$, $u(k^*)$, $q(k^*)$, $w(k^*)$\\
	}

\end{algorithm}

\begin{figure*}[htp!]
	\begin{multicols}{2}
		\includegraphics[width=\linewidth]{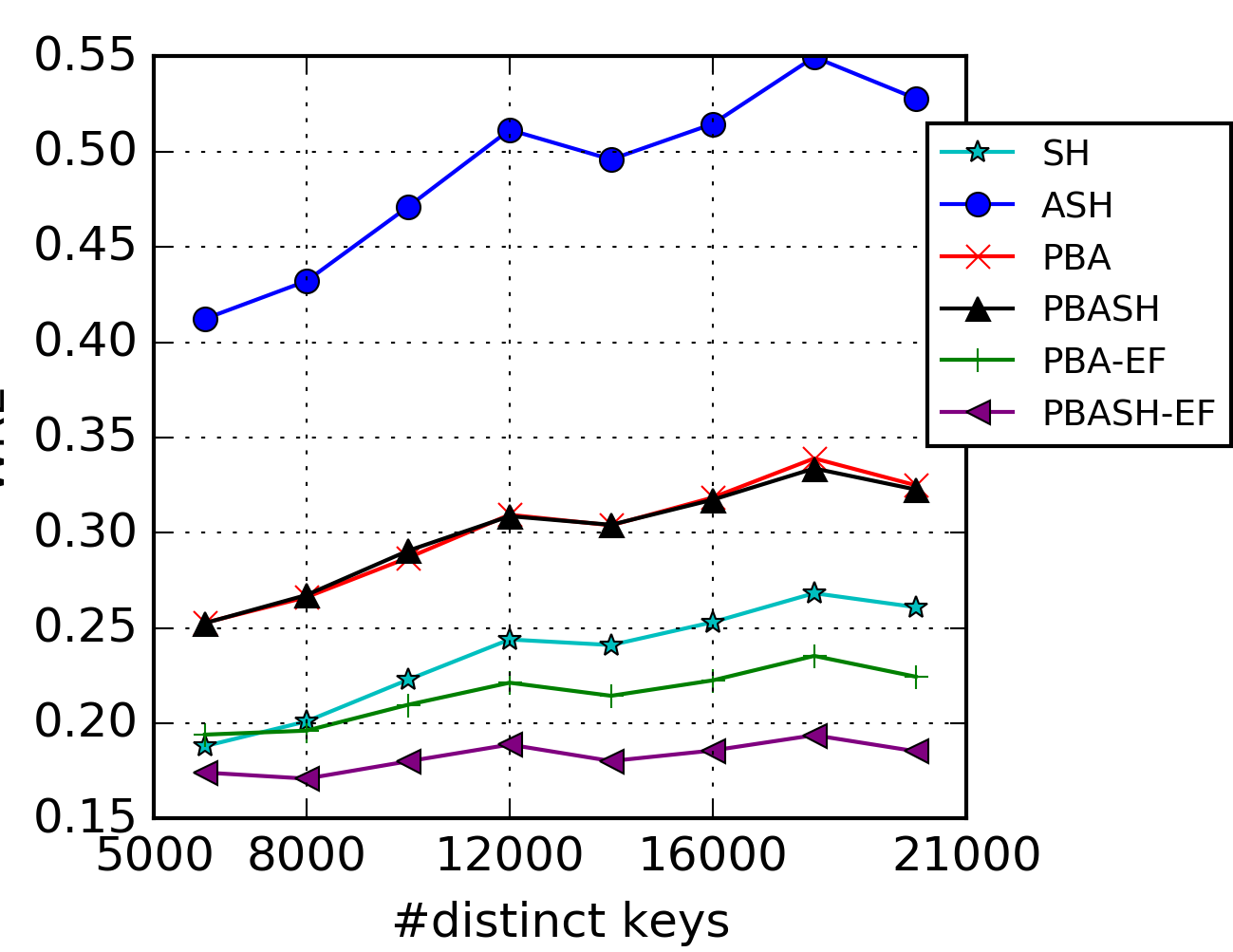}\par\caption{Weighted
                  relative error over all keys as a function of
                  distinct key count in reservoir size $m=1,000$.}
		\label{fig:bias_flow_error_for_byte_counts}
	\includegraphics[width=\linewidth]{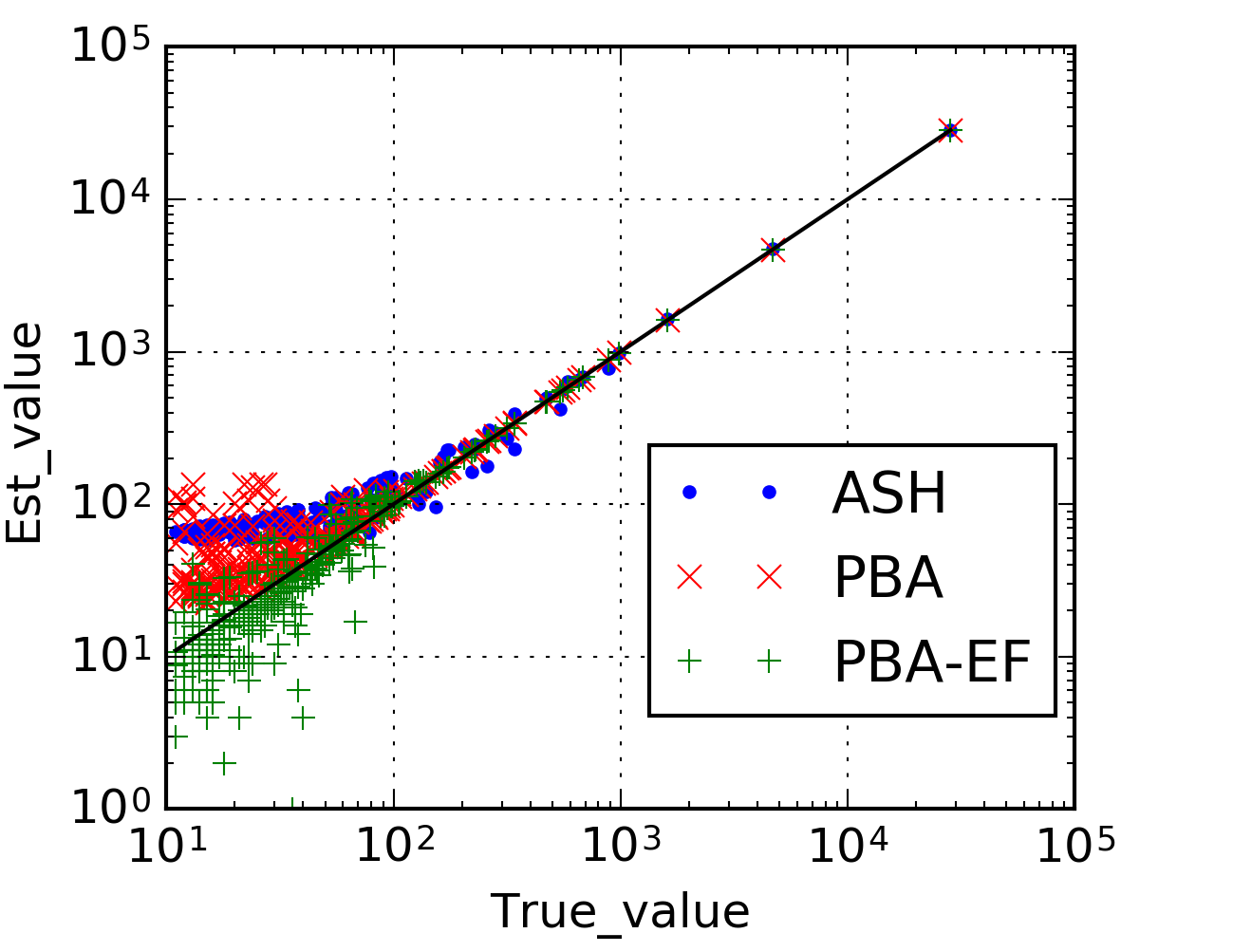}\par  \caption{Scatter
          plot of estimated vs. true aggregates for $10^4$ distinct keys sampled into reservoir size $m=500$.} 
		\label{fig:psh_scatter_plot_1000_10000}
	\end{multicols}
\end{figure*}

\subsection{Data Management \& Implementation Details}\label{sec:hashheap}

\def\hh{\textsc HashHeap}

In common with other stream aggregation schemes for (key, value) pairs,
\psh\ requires efficient access to the aggregate corresponding to the
incoming key $k$. Hash-tables provide an efficient means to achieve
this, with the hash $h(k)$ of the key $k$ referencing a location where
the aggregate, or in general its unbiased estimator is
maintained. 
\psh\ also maintains priorities as a priority queue. We implement this
as a heap. The question then arises of to efficiently combine
the heap and hash aspects of the aggregate store.

We manage this with
a combined structure called a \hh. 
This comprises two components. 
The first is a hash table that maps a key $k$ to a
pointer $\pi(k)$ into the second component. The second component is
a min-heap that maintains an entry
$(k,w,a,q)$ for each aggregate in storage, where $k$ is the key,
$u$ is the uniform random variable associated with $k$, $w$ the current incremented weight
since last admission, $a$ the current unbiased estimate, and $q$ the
current sampling probability. The heap is ordered by the priority
$r=w/u$ which is computed as required, with $u$ generated by hashing
on the key. The
heap is implemented in an array so that parent and child offsets can
be computed from the current offset of a key in the standard way.

\parab{Collision Resolution.}
In our design, keys are 
maintained in the heap, not in the hash. Collision identification
and resolution is performed by following a key $k$ to its position
$\pi(k)$ in the heap.  We illustrate for key insertion using linear
probing, which has been found to be extremely efficient for suitable
hash functions \cite{Patrascu:2012:PST:2220357.2220361}. Let $h$
denote the hash function. Suppose key $k$ is to be accessed. To find the
offset of key $k$ in the heap we probe the pointer hash table from
$h(k)$ until we find the pointer $\pi$ whose image in the hash table
is $k$. Probing to a vacant slot in the hash table indicates the key
is not in the heap. For insertion, the offset of the required location
in the heap is stored in the vacant slot in the hash table. Our approach is similar to one in \cite{Metwally:2005:ECF:2131560.2131596}, the difference being that in that work they key is maintained in the hash table, while each heap entry maintains a pointer back to is a corresponding hash entry. Our approach avoids storage for this second pointer, instead of computing it as needed from the key maintained in the heap.

\subsection{Computational and Storage Costs}\label{sec:costs}
\parab{Aggregation to an existing key is $O(1)$ average.} 
  All aggregation operations for a key $k$ are increasing its
  weight $w$ and hence also for its priority. Aggregation requires
  realignment of the heap, which is performed by bubbling
  down. i.e. swapping an element with its smallest priority child
  until it no longer has a larger priority than the child. The
  pointer offsets of the children are computed from the key $k$ as
  outlined above. The average cost for aggregation operation is $O(1)$. For simplicity, we
assume a perfectly balanced tree of depth $h$ and that the key to be
aggregated is uniformly distributed in the heap. Then the average
bubble down cost is no worse than $\sum_{\ell=0}^h 2^{\ell-h}(h-\ell)\le 2$.

\parab{Rejection of New Keys is $O(1)$ worst case.}
When an arriving item $(k,x)$ is not
present in reservoir, its priority is computed and compared with the
lowest priority item in the heap. Access to this item is $O(1)$. If
arriving item has lower priority it is discarded. The estimates of the
remaining items must be updated, but as established in Section~\ref{sec:deferred},
each update for a given key can be deferred until the next arrival
bearing that key.

\parab{Insertion/eviction for New Key is $O(\log m)$ worst case.} If the arriving item
has higher priority than the root item, the later is discarded, the new item inserted
at the root, then bubble down to its correct position in the
heap. This has worst case cost $O(\log m)$ for a reservoir of size $m$.

\parab{Retrieval is $O(1)$ per aggregate.} Any aggregate must undergo
a final deferred update prior to retrieval, incurring an $O(1)$ cost.

\parab{Storage Costs.} \textsl{Final Storage.} \psh, \prash\ and \ash\ all have the same final storage cost, requiring a (key, estimate) pair for all stored aggregated.
\textsl{Working Storage:} \psh\ and \prash\ are most costly for working storage,
requiring additional space per item for $q,w$ and the \hh\
pointer. The quasirandom number $u$ can be computed on demand by
hashing. These are maintained during stream aggregation, but discarded
at the end.

\begin{figure*}[htp!]
	\begin{multicols}{2}	
\includegraphics[width=\linewidth]{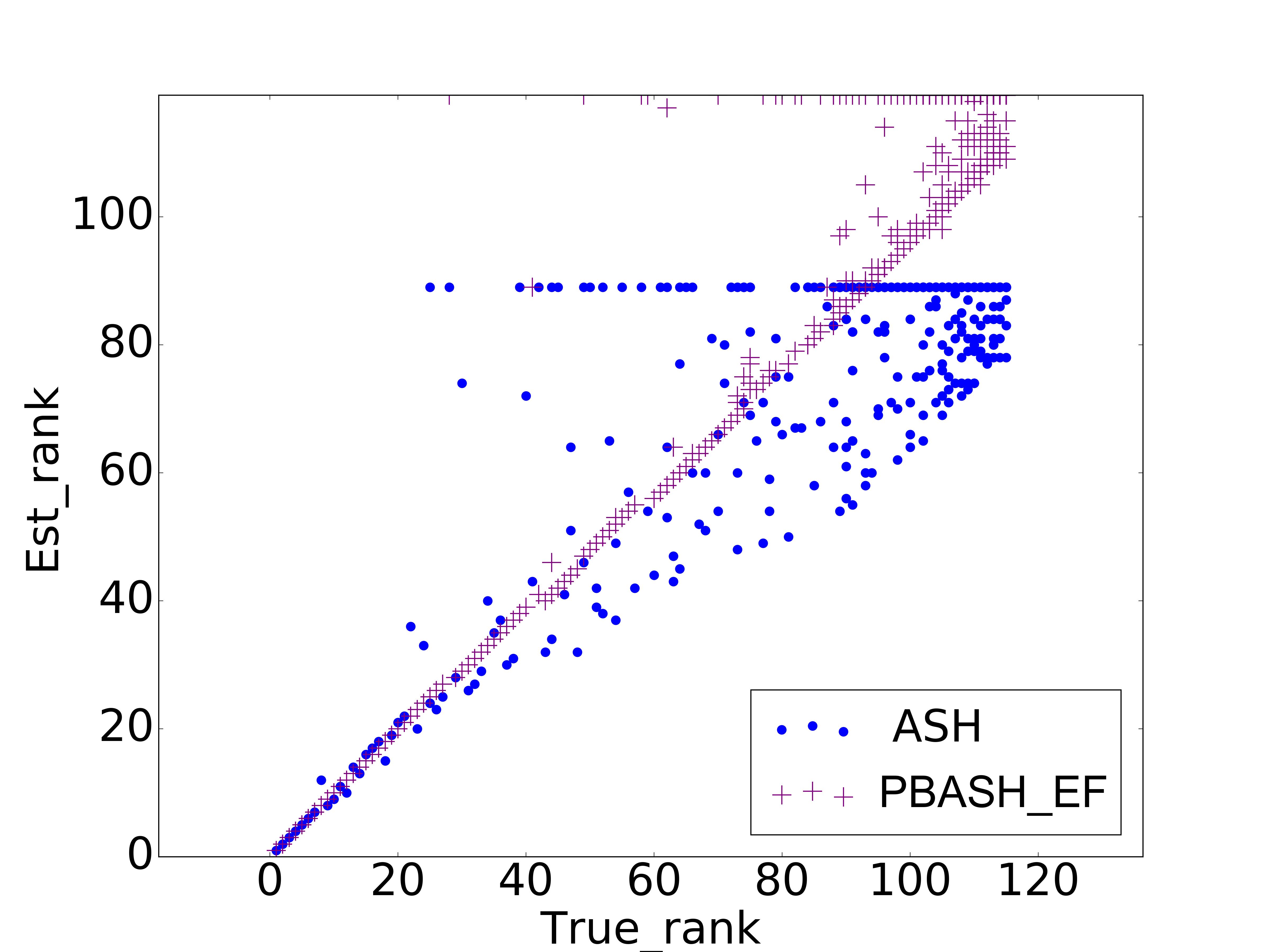}\par  \caption{Scatter of Estimated, Actual dense ranks, \prash\ and \ash. 5\% sampling; data as Figure~\ref{fig:psh_scatter_plot_1000_10000}.}
		\label{fig:ASH_PrSH_EF_rank}
\includegraphics[width=\linewidth]{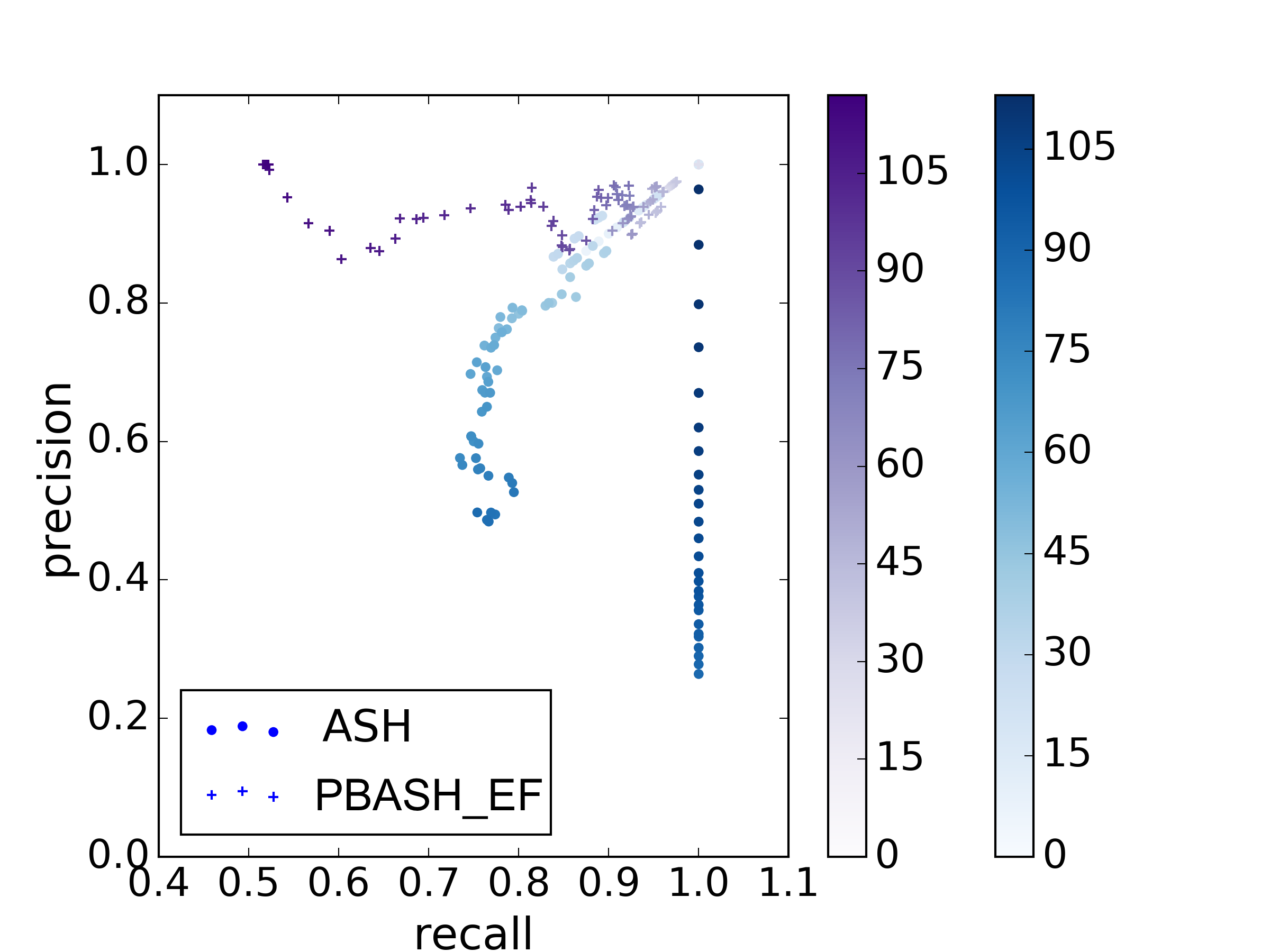}\par  \caption{Scatter of Prec($R$), Recall($R$) for dense ranks, rank $R$ on colormap. 5\% sampling; data as Figure~\ref{fig:psh_scatter_plot_1000_10000}.}
	\label{fig:ASH_PrSH_EF_colormap_maxRank_cb}
	\end{multicols}
\end{figure*}

\section{Evaluation}\label{sec:eval}

This section comprises a performance evaluation for \psh\ and \prash\ for accuracy and space and time complexity. We used both synthetic trace with features mimicking observed statistical behavior of network traffic, and real-word network traces from measured network denial of service attacks. These traces are chosen to represent dynamic network traffic, and serve to stress-test the summarization algorithms in their ability to adapt to dynamic conditions.
The evaluation represents measurement of network traffic over short time scales (at the time scale of seconds or shorter) that are if increasing interest for use in fine-scale traffic management in data center networks \cite{li2016flowradar}.

\subsection{Traces and Evaluation Metrics}

\parab{Trace Data and Platform.} The simulations ran on a 64-bit desktop equipped
with an Intel\textregistered\ Core\texttrademark\ i7-4790 Processor with 4 cores running at 3.6 GHz, each trial taking several seconds to tens of seconds. 
\begin{trivlist}
\item
\textsl{Trace 1: Synthetic Trace.} This trace was generated first by specifying a key set ranging in size from $6\times 10^3$ to $2\times 10^4$, and then for each key generating a set of unit weighted items whose number is drawn independently from a Pareto distribution with parameter $1.2$. The items are presented in random order. This trace is motivated by the observed heavy-tailed distribution of packets per flow aggregate in network traffic \cite{FRC98}.
\item \textsl{Trace 2: Network Trace with Distributed Denial of Service Attack (DDoS).} 
This trace is used to emulate the effect of network flooding with small packets
The traces is a 1-second CAIDA trace with $4.7\times 10^5$ packets and $62299$ distinct tuples (srcIP, dstIP, srcPort, dstPort, protocol) randomly mixed by 1-second DDoS traces~\cite{caida} 
with packet sending rate from $1.6\times 10^4$ to $6.0\times 10^6$ packets per second and distinct tuples from $6.4\times 10^3$ to $4.5\times 10^5$. The average size of one packet in the CAIDA trace is $495.5$ Bytes and that of the DDoS trace is $65.5$ Bytes.
\item[] \textsl{Trace 3: Dynamic Network Trace.}
The trace adds noise to a 15-second CAIDA trace. For each second, let the total byte volume be $V$, we generate a random probability $p\in(0,1)$, and $pV$ noise from another CAIDA trace is added to the original 1-second trace. 
\end{trivlist}

\begin{figure*}[htp!]
	\begin{multicols}{2}	
\includegraphics[width=\linewidth]{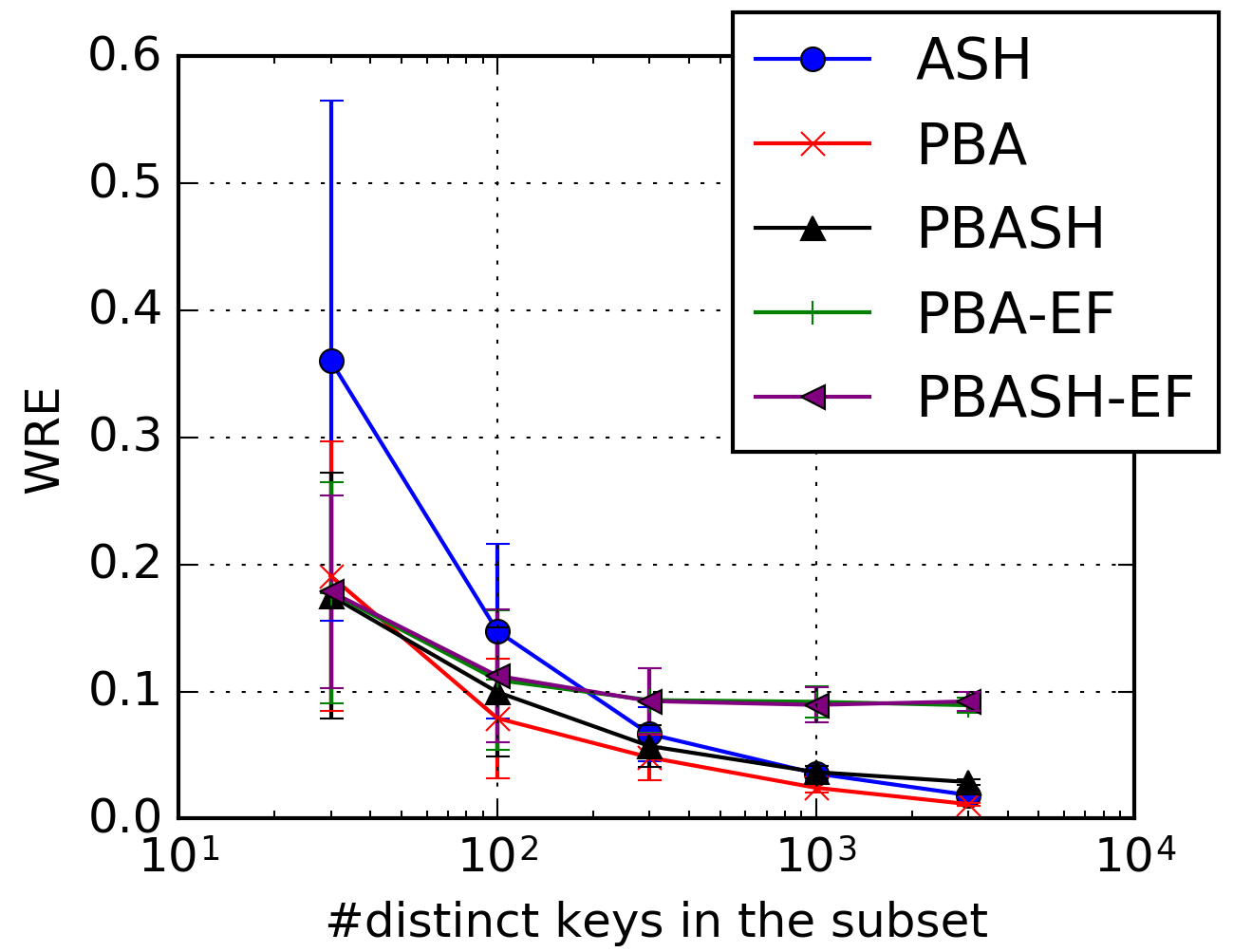}\par  \caption{WRE  as a function of subpopulation size
over 100 trials for $10^4$ distinct keys sampled into reservoir size $m=500$.}
		\label{fig:serror_df_csv}	
\includegraphics[width=\linewidth]{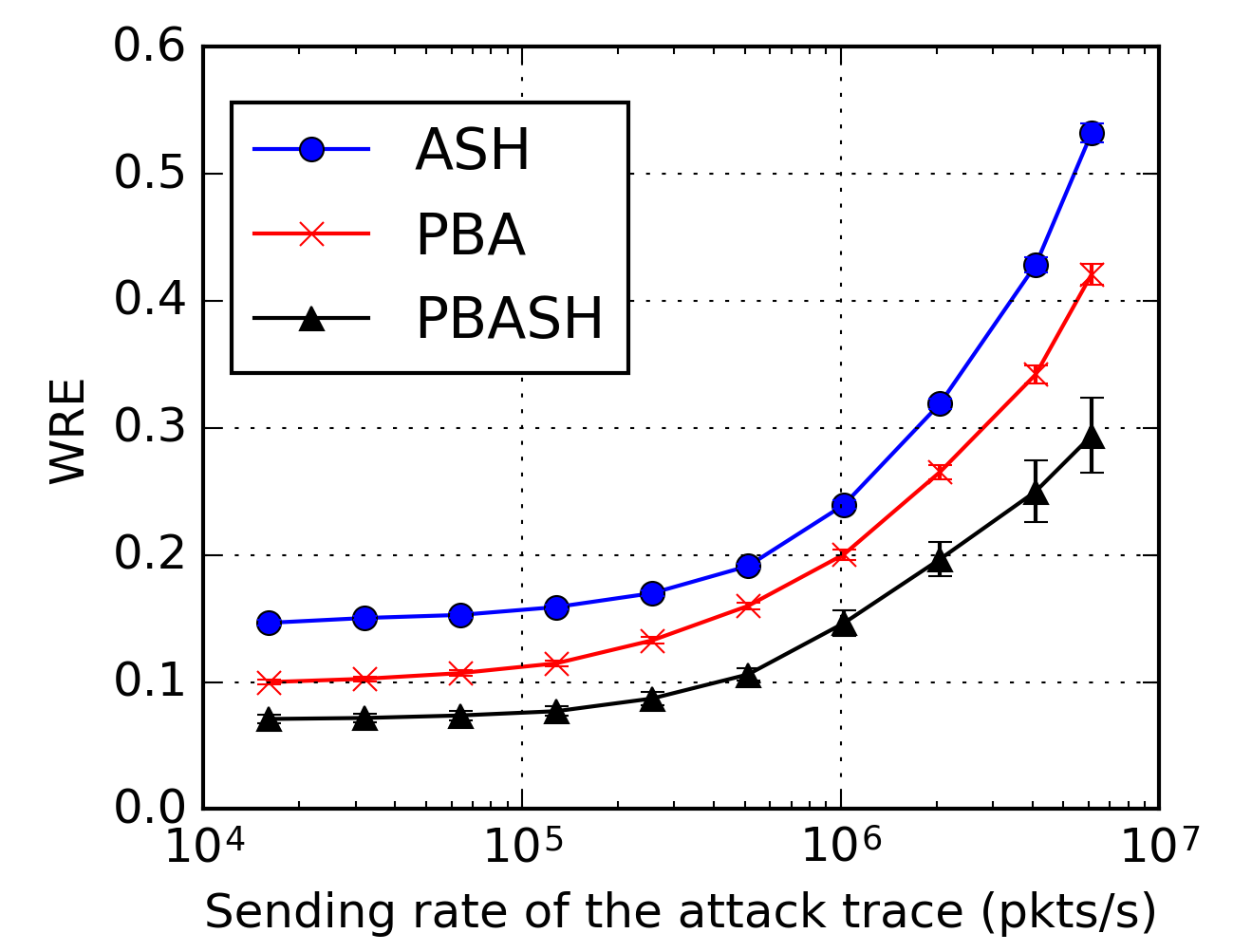}\par  \caption{WRE for mixed DDos traces at varying packet sending rates, and reservoir size $5,000$.}
		\label{fig:sending_rate_error_for_byte_counts}
	\end{multicols}
\end{figure*}

\parab{Evaluation Metrics.} The following metrics are measured against reservoir size, set as an independent variable, averaged over $100$ trials. For each trial, we randomize the order of the items in the traces. In addition, we randomly regenerate \textsl{Trace 1} for each trial.  
\begin{trivlist}
\item[] \textsl{Execution time:} This is the average time per packet over a trace
\item[] \textsl{Subpopulation Accuracy:} Our accuracy metric is the Weighted Relative Error (WRE), which we apply in two forms. The first is the average ${\sum_k|\hat X_k - X_k|}/{\sum_{k}X_k}$ where the sum runs over all distinct keys $k$. To evaluate accuracy for subpopulation queries we use a similar metric
$\sum_S |\hat X(S) - X(S)|/\sum_{S}X_S$ where $X(S)=\sum_{k\in S}x_k$ is the subset sum over a keyset $S$, and the sum runs over randomly chosen keysets $S\subset K$ of a given size $t$.
\item \textsl{Ranking Accuracy:} We compute accuracy for top-$R$ dense rank queries. In dense ranking, items with the same value receive the same rank, and ranks are consecutive. This avoids permutation noise of equal value; we also round estimates so as to reduce statistical noise. Let $\hat N(R)$ (respectively) and $N(R)$ denote the set of keys with true (respectively estimated) dense rank $\le R$. Then for a top-$R$ rank query, the precision and recall are
\be
\mbox{Prec}(R) = \frac{|N(R)\cap \hat N(R)|}{\hat N(R)}\mbox{ and }
\mbox{Rec}(R) = \frac{|N(R)\cap \hat N(R)|}{N(R)}
\ee
\end{trivlist}

\subsection{Accuracy Comparisons}

Figure~\ref{fig:bias_flow_error_for_byte_counts} illustrates error metrics for \psh, \prash, and \ash\ in a reservoir of size $m=1,000$ processing items from the synthetic Trace 1. The number of distinct keys varies from 6,000 to 20,000, representing a key sampling rate ranging from $17\%$ down to $5\%$,
WRE was reduced, relative to \ash, by about 40\%  for \psh\ and \prash,  by 53--57\%  for \psh-EF, and by 58--65\% for \prash-EF.
As shown, \prash\ and \prash-EF are able to achieve lower WRE than a best-case (non-adaptive) Sample and Hold (SH) in which the sampling rate is chosen so as minimize WRE.

To better understand the difference in error between \psh, \psh-EF and \ash, we drill down within an individual experiment. Figure~\ref{fig:psh_scatter_plot_1000_10000} is a scatter plot of estimated vs. true aggregate for the two methods for a synthetic trace containing $10^4$ distinct keys sampled into a reservoir of size $500$, i.e., a key sampling rate of $5\%$. The figure shows how \psh\ improves estimation accuracy for smaller weight keys, \ash\ having a larger additive error (note the logarithmic vertical axis).  As expected, \psh-EF\ further reduces the estimation error for small aggregates, typically underestimating the true value. 

\parab{Rank Estimation.} We evaluate rank estimation performance, focusing on algorithms involving Error Filtering since rankings should be less sensitive to bias than variability. Figure~\ref{fig:ASH_PrSH_EF_rank} shows a scatter plot of estimates vs. actual dense ranks at 5\% sampling for \prash-EF and \ash. Although both perform well for low ranks (larger aggregates), we observe increasing rank noise for \ash\ in mid to low ranks. The horizontal clusters in each case correspond to aggregates not sampled; there are noticeably more of these of lower true rank for \ash\ than \psh-EF. Figure~\ref{fig:ASH_PrSH_EF_colormap_maxRank_cb} shows precision and recall for top-$R$ rank queries. Precision is noticeably better for \prash-EF, particularly for middle ranks.

\parab{Subpopulation Weight Estimation.} Figure~\ref{fig:serror_df_csv} shows WRE for subpopulations over 100 random selected subpopulations as a function of subpopulation size. For small subpopulations up to size 100, \psh\ and derived methods provide up to about a 60\% reduction in WRE relative to \ash. The WREs of the unbiased methods (\psh, \prash, \ash) behave similarly for larger subpopulation sizes due to averaging, while the bias of the error filtering methods persist.

\parab{Network dynamics.} 
We study the effect in accuracy on an emulated DDoS attack with Trace 2. Figure~\ref{fig:sending_rate_error_for_byte_counts} shows the effect on WRE as the DDoS traffic rate increases, in a reservoir of size 5,000. The number of distinct keys increases in proportion to the attack traffic rate, with legitimate traffic representing a smaller proportion of the total.
\psh\ and \prash\ achieve lower error than \ash, even as errors for all methods increase, and \prash-EF (not shown) achieves 60\% reduction in error compared with \ash. Figure~\ref{fig:dynamic_error_for_byte_counts} shows a time series of WRE for the dynamic traffic of Trace 3, with samples taken over successive 250ms windows in a reservoir size 5,000. \psh\ and \prash\ have smaller fluctuations in WRE
in response to the dynamics
than \ash\, achieving similar reduction as before.

\begin{figure*}[htp!]
	\begin{multicols}{2}
\includegraphics[scale=0.71]{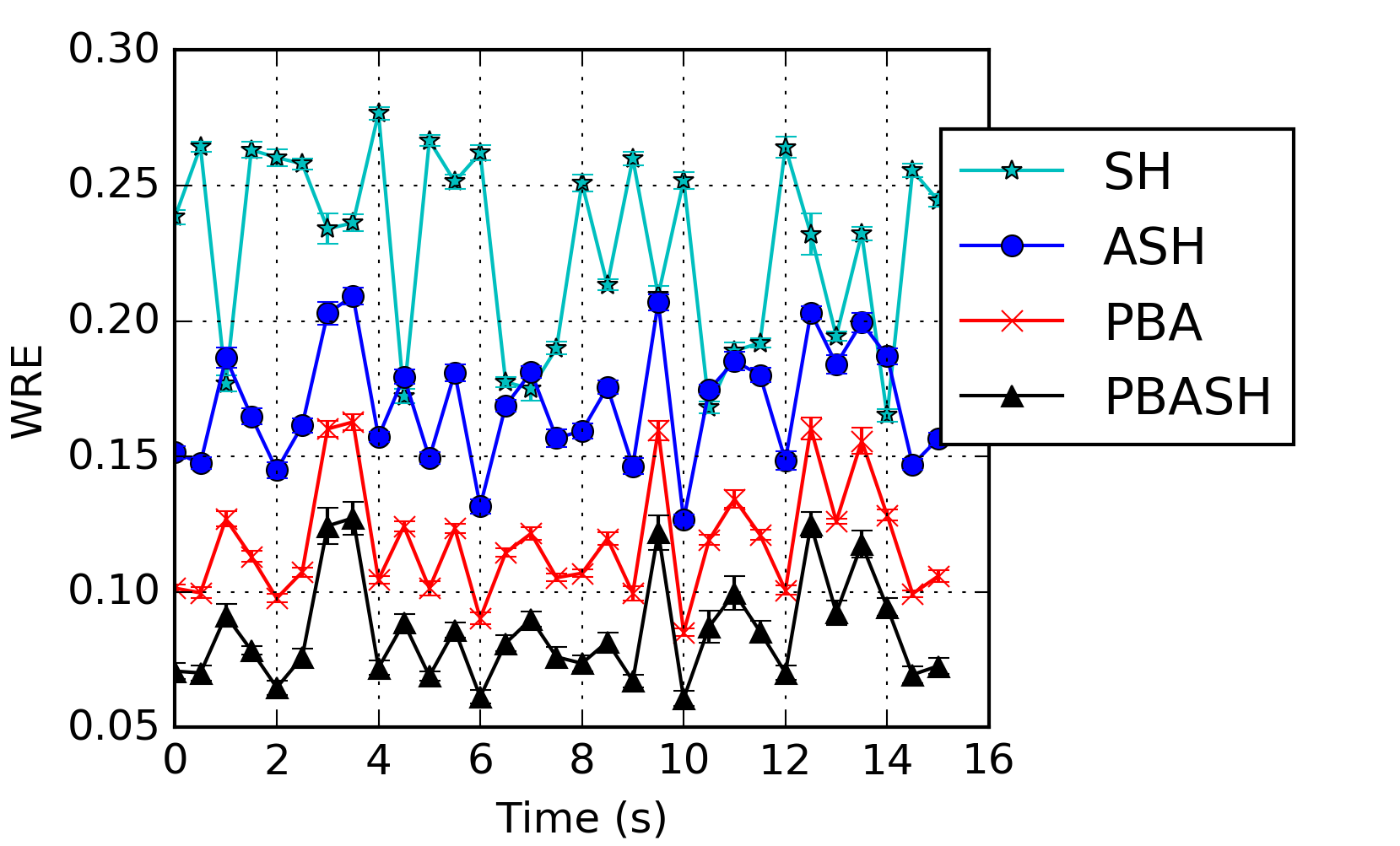}\caption{The impact of traffic dynamics by adding random noise when the reservoir size is $5,000$.}
	\label{fig:dynamic_error_for_byte_counts}
\includegraphics[width=0.9\linewidth]{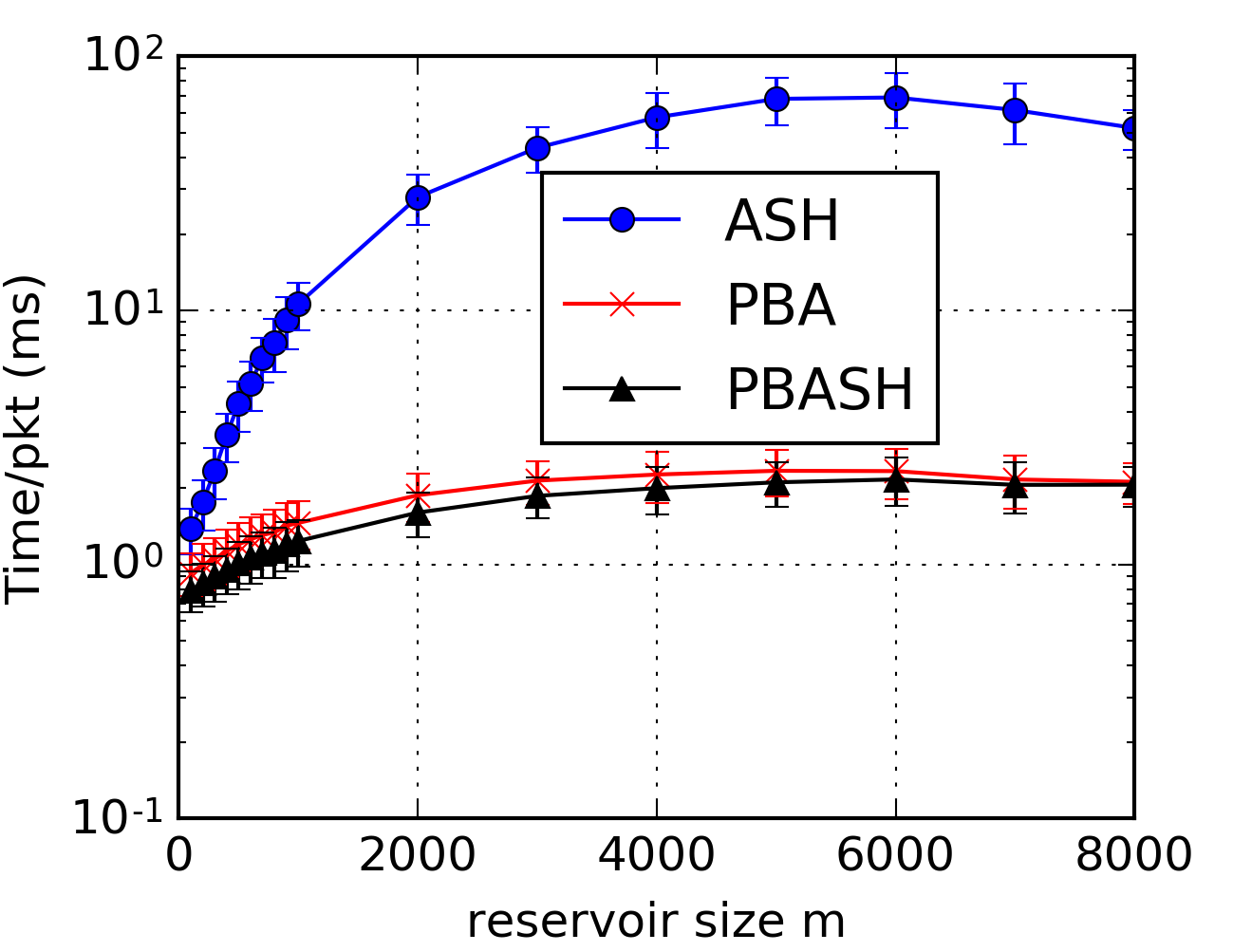}\par\caption{The time complexity compared to \ash\ with varying reservoir sizes and $10^4$ distinct keys.}
		\label{fig:countersize_time_complexity_for_byte_counts}
	\end{multicols}
\end{figure*}

\begin{figure}[htp!]
\includegraphics[width=0.9\linewidth]{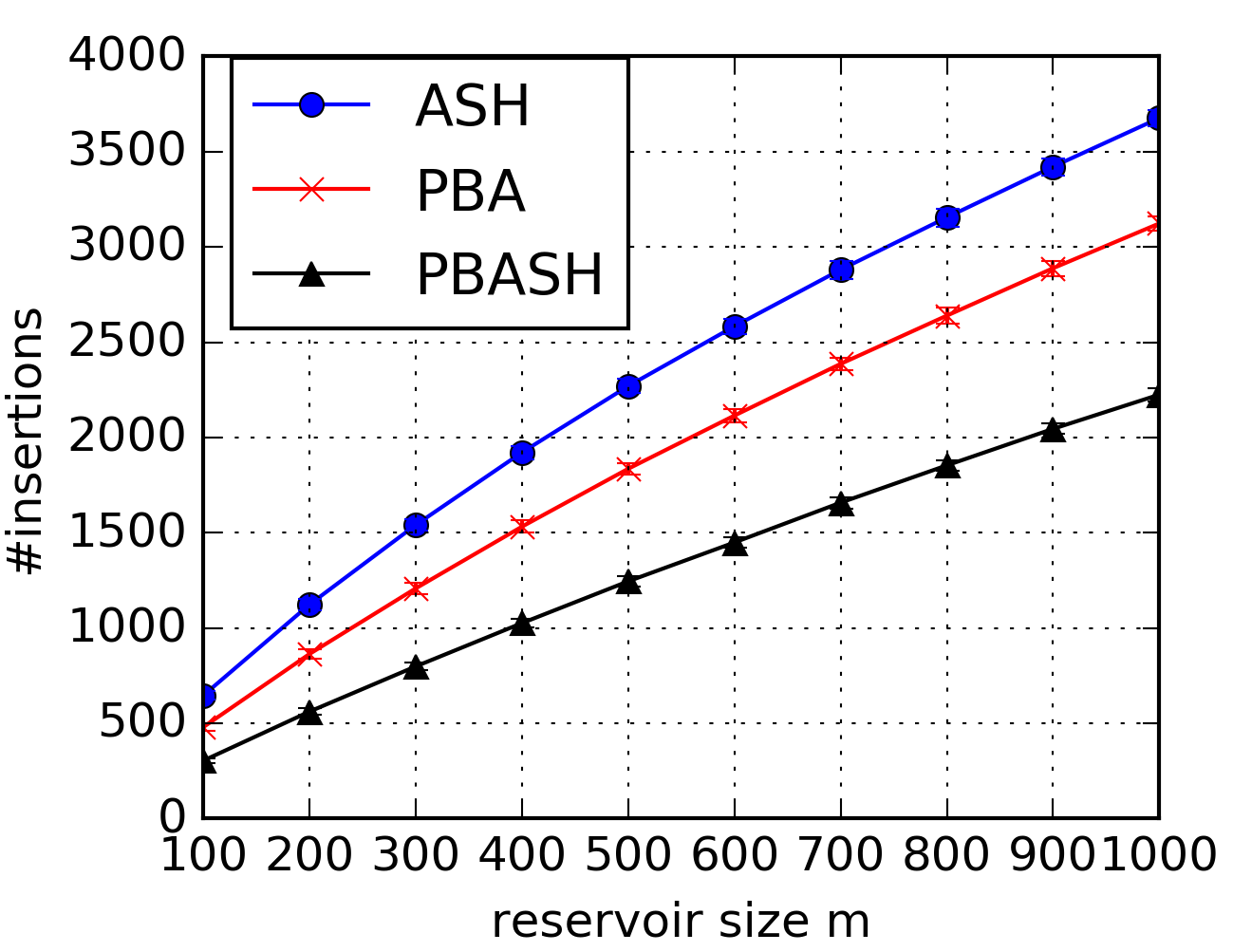}\par  \caption{The number of insertions when the reservoir size is from $100$ to $1,000$.} 
		\label{fig:insertions_countersize_time_complexity_for_byte_counts}		
\end{figure}

\subsection{Computational Complexity}
Figure~\ref{fig:countersize_time_complexity_for_byte_counts} shows the processing time per packet of \psh, \prash\ and \ash. No optimizations of \ash\ were used beyond the specification in \cite{Cohen:2015:SSF:2783258.2783279}.
With this proviso, the $O(m)$ cost for key eviction from reservoir size $m$ for \ash\ appears evident through the initial linear growth of the time per packet. The noticeably lower growth for \psh\ and \prash\ are expected due to its $O(\log m)$ time for inserting a new key after eviction of a current key.  Since insertion/eviction is the most costly part for all algorithms we display the experimental number of these for each algorithm in Figure~\ref{fig:insertions_countersize_time_complexity_for_byte_counts}. \prash\ has about half the insertions of \ash, another factor in its smaller per packet time. \prash\ also has a smaller number of insertions than \psh. This is to be expected, since the \prash\ pre-sampling stages causes fewer keys to be admitted to the reservoir.

\section{Conclusions}\label{sec:conclude}
Weighted sample-based algorithms are a flexible approach to stream
summarization, whose outputs can be readily utilized by downstream
applications for queries on ranks and subpopulations. This
paper provides a new set of algorithms, Priority-Based Aggregation and
its variants) of this type. \psh\ is designed around a single random
variable per key aggregate, allowing considerable speed-up in a fixed
cache, and it also improves accuracy for a given sample size, compared
with state-of-the-art methods.

\section{Proofs of the Theorems}\label{sec:proofs}

\begin{proof}[Proof of Theorem~\ref{thm:simple}] For each $k$ we proceed by induction
  on $t\ge s_k=\min\{s: k_s = k\}$ 
and establish that
\be\label{eq:unb:iter}
\E[\hat X_{k,t}|\hat X_{k,t-1},C]-X_{k,t}=\hat X_{k,t-1}-X_{k,t-1}\ee
for all members $C$ of a covering partition (i.e., a set of
  disjoint events whose union is identically true). Since $\hat
  X_{k,s_k-1}=X_{k,s_k-1}=0$ we conclude that $\E[\hat X_{k,t}]=X_{k,t}$.

For $s_k\le s \le s'$ let $A_k(s)=\{k\notin \hat K_{s-1}\}$ (note
$A_k(s_k)$ is identically true), let $B_k(s,s')$ denote the event $\{k\in\hat K_s\dots,\hat
  K_{s'}\}$, i.e., that $k$ is in sample at all times in $[s,s']$
. Then
  for each $t\ge s_k$ the collection of events formed by
  $\{A_k(s)B_k(s,t-1):\ s\in[s_k,t-1]\}$, and $A_k(t)$
is a covering partition. 

(i) \textsl{Conditioning on $A_k(t)$}  On $A_k(t)$, $k_t\ne k$
implies $\hat X_{k,t}=\hat X_{k,t-1} = 0 = X_{k,t}-X_{k,t-1}$. On the
other hand $k_t=k$ implies $t\in T^0$. Further conditioning on $z_{k,t}=\min_{j\in
  \hat K_{j,t-1}}W_{j,t-1}/u_j$ then (\ref{eq:iter}) tells us that
\be
\Pr[k\in \hat K_t | A_k(t),z_{k,t}]=\Pr[W_{k,t}/u_k \le z_{k,t}]=q_{k,t}
\ee
and hence regardless of $z_{k,t}$ we have
\be \E[\hat X_{k,t} | X_{k,t-1},A_k(t),z_{k,t}]=\hat X_{k,t-1} + X_{k,t}-X_{k,t-1}
\ee

(ii) \textsl{Conditioning on $A_k(s)B_k(s,t-1)$ any
  $s\in[s_k,t-1]$.}  Under this condition $k\in \hat K_{t-1}$ and if
furthermore $k_t\in \hat K_{t-1}$ then $t\notin T^0$ and the first
line in (\ref{eq:iter}) holds.
Suppose instead $k_t\notin K_{t-1}$ so that $t\in T^0$. Let
$\cZ_{k}(t,s)=\{z_{k,r}: r\in [s,t]^0\}$. Observing that
\be
\Pr[B_k(t,s)|A_k(s),\cZ_k(t,s)]=
\Pr[\cap_{r\in [s,t]^0}\{z_{k,r}\le \frac{W_{k,r}}{u_k}\}]=q_{k,t}
\nonumber
\ee
then
\bea
&&\kern -40pt \Pr[k\in\hat K_t| B_k(t-1,s)A_k(s),\cZ_k(t,s)]\\
&=&
\frac{\Pr[B_k(t,s) | A_k(s),\cZ_k(t,s)]}
{\Pr[B_k(t-1,s) | A_k(s),\cZ_k(t-1,s)]}
=\frac{q_{k,t}}{q_{k,\tau_t}} = Q_{k,t}\nonumber
\eea
and hence
\be\label{eq:cond:AB}
\E[\hat X_{k,t}| \hat X_{k,t-1},A_k(s),\cZ_k(t,s)] =\hat X_{k,t-1}
\ee
independently of the conditions on the LHS of (\ref{eq:cond:AB}). AS
noted above, $k\in \hat K_{t-1}$ on $B(t-1,s)$ hence
$X_{k,t}=X_{k,t-1}$ and we recover (\ref{eq:unb:iter}). Since we now
established (\ref{eq:unb:iter}) over all members $C$ of a covering
partition, the proof is complete.
\end{proof}

\begin{proof}[Proof of Theorem~\ref{thm:delay-alt}]
$t\in T$ means the arriving $k_t\ne d_t$ is admitted to the reservoir and hence
\be\label{eq:mod:iter}
z_t=\frac{W_{d_t,t}}{u_{d_t}}\ge \frac{W_{d_t,s}}{u_{d_t}}\ge z_s
\ee
for all $s\in [\tau_{d_t,t},t]^0$.
The first inequality follows because $W_{d_t,s}$ is nondecreasing on
the interval $[\tau_{d_t,t},t]^0$.  The second inequality follows
because the key
$d_t$ survives selection throughout $[\tau_{d_t,t},t]^0$ and hence its priority
cannot be lower than the threshold $z_s$ for any $s$ in that interval.
Since $d_t$ was admitted at $\tau_{d_t,t}$, then $d_{\tau_{d_t,t}}\ne d_t$ and hence we
apply the argument back recursively to the first sampling time
$m+1$. This establishes $z_t\ge z^*_t$ and hence $z_t=z_t^*$.

(ii) $i$ is admitted to $\hat K_t$ if $t\in T$ with $i=k_t\ne d_t$ and hence
by (i),
$q_{i,t}=\min\{1,W_{i,t}/z_t\}=\min\{1,w_{i,i}/z^*_t\}=q^*_{i,t}$. 
We
establish the general case by induction.
Assume $t\in T^0$ and $q_{i,s}=q^*_{i,s}$ for all $s\in[\tau_{i,t},\tau_t]^0$, and 
consider first the case that $z_{t}>z^*_{\tau_t}t$. Then $z^*_{t}=z_{t}$ hence
$q^*_{i,t}=q_{i,t}$.
If instead $z_{t}\le z^*_{\tau_t}$ then $z^*_{\tau_t}=z^*_{t}$ and
\be
\frac{W_{i,t}}{z_{t}}\ge \frac{W_{i,t}}{z^*_{t}} \ge
\frac{W_{i,\tau_t}}{z^*_{t}} = \frac{W_{i,\tau_t}}{z^*_{t}}
\ee
Thus we can replace $z_{t}$ by $z^*_{t}$ 
but
use of either leaves
the iterated value unchanged, since by the induction
hypothesis, both are greater than $q_{i,\tau_t}\le W_{i,\tau_t}/z^*_{i,\tau_t}$
\end{proof}

\bibliographystyle{ACM-Reference-Format}
\bibliography{paper,bibmaster,ahmed,fair,cycle,psh-paper,psh}


\begin{thebibliography}{00}


\ifx \showCODEN    \undefined \def \showCODEN     #1{\unskip}     \fi
\ifx \showDOI      \undefined \def \showDOI       #1{{\tt DOI:}\penalty0{#1}\ }
  \fi
\ifx \showISBNx    \undefined \def \showISBNx     #1{\unskip}     \fi
\ifx \showISBNxiii \undefined \def \showISBNxiii  #1{\unskip}     \fi
\ifx \showISSN     \undefined \def \showISSN      #1{\unskip}     \fi
\ifx \showLCCN     \undefined \def \showLCCN      #1{\unskip}     \fi
\ifx \shownote     \undefined \def \shownote      #1{#1}          \fi
\ifx \showarticletitle \undefined \def \showarticletitle #1{#1}   \fi
\ifx \showURL      \undefined \def \showURL       #1{#1}          \fi
\providecommand\bibfield[2]{#2}
\providecommand\bibinfo[2]{#2}
\providecommand\natexlab[1]{#1}
\providecommand\showeprint[2][]{arXiv:#2}

\bibitem[\protect\citeauthoryear{Ahmed, Duffield, Neville, and Kompella}{Ahmed
  et~al\mbox{.}}{2014}]%
        {ahmed2014gsh}
\bibfield{author}{\bibinfo{person}{Nesreen~K. Ahmed}, \bibinfo{person}{Nick~G.
  Duffield}, \bibinfo{person}{Jennifer Neville}, {and}
  \bibinfo{person}{Ramana~Rao Kompella}.} \bibinfo{year}{2014}\natexlab{}.
\newblock \showarticletitle{Graph Sample and Hold: A Framework for Big-Graph
  Analytics}. In \bibinfo{booktitle}{{\em ACM SIGKDD}}.
\newblock


\bibitem[\protect\citeauthoryear{Ahmed, Duffield, Willke, and Rossi}{Ahmed
  et~al\mbox{.}}{2017}]%
        {AhmedVLDB2017}
\bibfield{author}{\bibinfo{person}{Nesreen~K. Ahmed}, \bibinfo{person}{Nick~G.
  Duffield}, \bibinfo{person}{Theodore~L. Willke}, {and}
  \bibinfo{person}{Ryan~A. Rossi}.} \bibinfo{year}{2017}\natexlab{}.
\newblock \showarticletitle{On Sampling from Massive Graph Streams}.
\newblock \bibinfo{journal}{{\em PVLDB\/}} \bibinfo{volume}{10},
  \bibinfo{number}{11} (\bibinfo{date}{August} \bibinfo{year}{2017}).
\newblock


\bibitem[\protect\citeauthoryear{Alon, Matias, and Szegedy}{Alon
  et~al\mbox{.}}{1999}]%
        {AMS99}
\bibfield{author}{\bibinfo{person}{N. Alon}, \bibinfo{person}{Y. Matias}, {and}
  \bibinfo{person}{M. Szegedy}.} \bibinfo{year}{1999}\natexlab{}.
\newblock \showarticletitle{The Space Complexity of Approximating the Frequency
  Moments}.
\newblock \bibinfo{journal}{{\it J. Comput. System Sci.}} \bibinfo{volume}{58},
  \bibinfo{number}{1} (\bibinfo{year}{1999}), \bibinfo{pages}{137--147}.
\newblock


\bibitem[\protect\citeauthoryear{Andoni, Krauthgamer, and Onak}{Andoni
  et~al\mbox{.}}{2010}]%
        {andoni}
\bibfield{author}{\bibinfo{person}{Alexandr Andoni}, \bibinfo{person}{Robert
  Krauthgamer}, {and} \bibinfo{person}{Krzysztof Onak}.}
  \bibinfo{year}{2010}\natexlab{}.
\newblock \bibinfo{booktitle}{{\em Streaming Algorithms from Precision
  Sampling}}.
\newblock \bibinfo{type}{{T}echnical {R}eport} 1011.1263.
  \bibinfo{institution}{arXiv}.
\newblock


\bibitem[\protect\citeauthoryear{Andreyev}{Andreyev}{2014}]%
        {fb2014dcfabric}
\bibfield{author}{\bibinfo{person}{Alexey Andreyev}.}
  \bibinfo{year}{2014}\natexlab{}.
\newblock \bibinfo{title}{{Introducing data center fabric, the next-generation
  Facebook data center network}}.
\newblock   (\bibinfo{year}{2014}).
\newblock
\newblock
\shownote{\url{https://code.facebook.com/posts/360346274145943/introducing-data-center-fabric-the-next-generation-facebook-data-center-network/}.}


\bibitem[\protect\citeauthoryear{Braverman and Ostrovsky}{Braverman and
  Ostrovsky}{2010}]%
        {Braverman:2010:ZFL:1806689.1806729}
\bibfield{author}{\bibinfo{person}{Vladimir Braverman} {and}
  \bibinfo{person}{Rafail Ostrovsky}.} \bibinfo{year}{2010}\natexlab{}.
\newblock \showarticletitle{Zero-one Frequency Laws} {\em
  (\bibinfo{series}{STOC '10})}. \bibinfo{publisher}{ACM},
  \bibinfo{address}{New York, NY, USA}, \bibinfo{pages}{281--290}.
\newblock
\showISBNx{978-1-4503-0050-6}


\bibitem[\protect\citeauthoryear{CAIDA}{CAIDA}{2012}]%
        {caida}
\bibfield{author}{\bibinfo{person}{CAIDA}.} \bibinfo{year}{2012}\natexlab{}.
\newblock \bibinfo{title}{{CAIDA Anonymized Internet Traces}}.
\newblock   (\bibinfo{year}{2012}).
\newblock
\newblock
\shownote{\url{http://www.caida.org/data/passive/passive_2012_dataset.xml}.}


\bibitem[\protect\citeauthoryear{{Cisco Systems}}{{Cisco Systems}}{2012}]%
        {netflow}
\bibfield{author}{\bibinfo{person}{{Cisco Systems}}.}
  \bibinfo{year}{2012}\natexlab{}.
\newblock \bibinfo{title}{Introduction to Cisco IOS NetFlow - A Technical
  Overview}.
\newblock
  \bibinfo{howpublished}{\url{http://www.cisco.com/c/en/us/products/collateral/ios-nx-os-software/ios-netflow/prod_white_paper0900aecd80406232.html}}.
    (\bibinfo{year}{2012}).
\newblock


\bibitem[\protect\citeauthoryear{Cohen}{Cohen}{2015}]%
        {Cohen:2015:SSF:2783258.2783279}
\bibfield{author}{\bibinfo{person}{Edith Cohen}.}
  \bibinfo{year}{2015}\natexlab{}.
\newblock \showarticletitle{Stream Sampling for Frequency Cap Statistics}. In
  \bibinfo{booktitle}{{\em ACM SIGKDD}}. \bibinfo{address}{New York, NY, USA},
  \bibinfo{pages}{159--168}.
\newblock
\showISBNx{978-1-4503-3664-2}


\bibitem[\protect\citeauthoryear{Cohen, Cormode, and Duffield}{Cohen
  et~al\mbox{.}}{2012}]%
        {cohen2012don}
\bibfield{author}{\bibinfo{person}{Edith Cohen}, \bibinfo{person}{Graham
  Cormode}, {and} \bibinfo{person}{Nick Duffield}.}
  \bibinfo{year}{2012}\natexlab{}.
\newblock \showarticletitle{Don't let the negatives bring you down: sampling
  from streams of signed updates}.
\newblock \bibinfo{journal}{{\em In SIGMETRICS\/}} \bibinfo{volume}{40},
  \bibinfo{number}{1} (\bibinfo{year}{2012}), \bibinfo{pages}{343--354}.
\newblock


\bibitem[\protect\citeauthoryear{Cohen, Duffield, Kaplan, Lund, and
  Thorup}{Cohen et~al\mbox{.}}{2007}]%
        {Cohen:2007:AEA:1298306.1298344}
\bibfield{author}{\bibinfo{person}{Edith Cohen}, \bibinfo{person}{Nick
  Duffield}, \bibinfo{person}{Haim Kaplan}, \bibinfo{person}{Carsten Lund},
  {and} \bibinfo{person}{Mikkel Thorup}.} \bibinfo{year}{2007}\natexlab{}.
\newblock \showarticletitle{Algorithms and Estimators for Accurate
  Summarization of Internet Traffic}. In \bibinfo{booktitle}{{\em Proceedings
  of the 7th ACM SIGCOMM Conference on Internet Measurement}} {\em
  (\bibinfo{series}{IMC '07})}. \bibinfo{publisher}{ACM}, \bibinfo{address}{New
  York, NY, USA}, \bibinfo{pages}{265--278}.
\newblock
\showISBNx{978-1-59593-908-1}


\bibitem[\protect\citeauthoryear{Cohen and Kaplan}{Cohen and Kaplan}{2007}]%
        {bottomk07:ds}
\bibfield{author}{\bibinfo{person}{E. Cohen} {and} \bibinfo{person}{H.
  Kaplan}.} \bibinfo{year}{2007}\natexlab{}.
\newblock \showarticletitle{Summarizing data using Bottom-k sketches}. In
  \bibinfo{booktitle}{{\em Proceedings of the ACM PODC'07 Conference}}.
\newblock


\bibitem[\protect\citeauthoryear{Cohen and Kaplan}{Cohen and Kaplan}{2008}]%
        {bottomk:VLDB2008}
\bibfield{author}{\bibinfo{person}{E. Cohen} {and} \bibinfo{person}{H.
  Kaplan}.} \bibinfo{year}{2008}\natexlab{}.
\newblock \showarticletitle{Tighter estimation using bottom-k sketches}. In
  \bibinfo{booktitle}{{\em Proceedings of the 34th VLDB Conference}}.
\newblock
\showURL{%
\url{http://arxiv.org/abs/0802.3448}}


\bibitem[\protect\citeauthoryear{Cormode and Muthukrishnan}{Cormode and
  Muthukrishnan}{2004}]%
        {Cormode04animproved}
\bibfield{author}{\bibinfo{person}{Graham Cormode} {and} \bibinfo{person}{S.
  Muthukrishnan}.} \bibinfo{year}{2004}\natexlab{}.
\newblock \showarticletitle{An improved data stream summary: The Count-Min
  sketch and its applications}.
\newblock \bibinfo{journal}{{\em J. Algorithms\/}}  \bibinfo{volume}{55}
  (\bibinfo{year}{2004}), \bibinfo{pages}{29--38}.
\newblock


\bibitem[\protect\citeauthoryear{Cranor, Johnson, Spatscheck, and ladislav
  Shkapenyuk}{Cranor et~al\mbox{.}}{2003}]%
        {cranor2003}
\bibfield{author}{\bibinfo{person}{Chuck Cranor}, \bibinfo{person}{Theodore
  Johnson}, \bibinfo{person}{Oliver Spatscheck}, {and} \bibinfo{person}{V\
  ladislav Shkapenyuk}.} \bibinfo{year}{2003}\natexlab{}.
\newblock \showarticletitle{Gigascope: A Stream Database for Network
  Applications}. In \bibinfo{booktitle}{{\em Proc ACM SIGMOD}}.
\newblock


\bibitem[\protect\citeauthoryear{Duffield and Krishnamurthy}{Duffield and
  Krishnamurthy}{2016}]%
        {7852324}
\bibfield{author}{\bibinfo{person}{N. Duffield} {and} \bibinfo{person}{B.
  Krishnamurthy}.} \bibinfo{year}{2016}\natexlab{}.
\newblock \showarticletitle{Efficient sampling for better OSN data
  provisioning}. In \bibinfo{booktitle}{{\em 2016 54th Annual Allerton
  Conference on Communication, Control, and Computing (Allerton)}}.
  \bibinfo{pages}{861--868}.
\newblock


\bibitem[\protect\citeauthoryear{Duffield, Thorup, and Lund}{Duffield
  et~al\mbox{.}}{2007}]%
        {DLT:jacm07}
\bibfield{author}{\bibinfo{person}{N. Duffield}, \bibinfo{person}{M. Thorup},
  {and} \bibinfo{person}{C. Lund}.} \bibinfo{year}{2007}\natexlab{}.
\newblock \showarticletitle{Priority sampling for estimating arbitrary subset
  sums}.
\newblock \bibinfo{journal}{{\em J. Assoc. Comput. Mach.\/}}
  \bibinfo{volume}{54}, \bibinfo{number}{6} (\bibinfo{year}{2007}).
\newblock


\bibitem[\protect\citeauthoryear{Efraimidis and Spirakis}{Efraimidis and
  Spirakis}{2006}]%
        {ES:IPL2006}
\bibfield{author}{\bibinfo{person}{P.~S. Efraimidis} {and}
  \bibinfo{person}{P.~G. Spirakis}.} \bibinfo{year}{2006}\natexlab{}.
\newblock \showarticletitle{Weighted random sampling with a reservoir}.
\newblock \bibinfo{journal}{{\em Inf. Process. Lett.\/}} \bibinfo{volume}{97},
  \bibinfo{number}{5} (\bibinfo{year}{2006}), \bibinfo{pages}{181--185}.
\newblock


\bibitem[\protect\citeauthoryear{Estan and Varghese}{Estan and
  Varghese}{2002}]%
        {Estan2002}
\bibfield{author}{\bibinfo{person}{Cristian Estan} {and}
  \bibinfo{person}{George Varghese}.} \bibinfo{year}{2002}\natexlab{}.
\newblock \showarticletitle{New Directions in Traffic Measurement and
  Accounting}. In \bibinfo{booktitle}{{\em Proc. of SIGCOMM}}.
  \bibinfo{pages}{323--336}.
\newblock


\bibitem[\protect\citeauthoryear{Feldmann, Rexford, and Caceres}{Feldmann
  et~al\mbox{.}}{1998}]%
        {FRC98}
\bibfield{author}{\bibinfo{person}{Anja Feldmann}, \bibinfo{person}{Jennifer
  Rexford}, {and} \bibinfo{person}{Ramon Caceres}.}
  \bibinfo{year}{1998}\natexlab{}.
\newblock \showarticletitle{Efficient policies for carrying Web traffic over
  flow-switched networks}.
\newblock \bibinfo{journal}{{\em IEEE/ACM Transactions on Networking\/}}
  (\bibinfo{date}{December} \bibinfo{year}{1998}), \bibinfo{pages}{673--685}.
\newblock


\bibitem[\protect\citeauthoryear{Gibbons and Matias}{Gibbons and
  Matias}{1998}]%
        {GM:sigmod98}
\bibfield{author}{\bibinfo{person}{P. Gibbons} {and} \bibinfo{person}{Y.
  Matias}.} \bibinfo{year}{1998}\natexlab{}.
\newblock \showarticletitle{New sampling-based summary statistics for improving
  approximate query answers}. In \bibinfo{booktitle}{{\em SIGMOD}}. ACM.
\newblock


\bibitem[\protect\citeauthoryear{Gilbert, Guha, Indyk, Kotidis, Muthukrishnan,
  and Strauss}{Gilbert et~al\mbox{.}}{2002}]%
        {Gilbert:2002:FSA:509907.509966}
\bibfield{author}{\bibinfo{person}{Anna~C. Gilbert}, \bibinfo{person}{Sudipto
  Guha}, \bibinfo{person}{Piotr Indyk}, \bibinfo{person}{Yannis Kotidis},
  \bibinfo{person}{S. Muthukrishnan}, {and} \bibinfo{person}{Martin~J.
  Strauss}.} \bibinfo{year}{2002}\natexlab{}.
\newblock \showarticletitle{Fast, Small-space Algorithms for Approximate
  Histogram Maintenance} {\em (\bibinfo{series}{STOC '02})}.
  \bibinfo{publisher}{ACM}, \bibinfo{address}{New York, NY, USA},
  \bibinfo{pages}{389--398}.
\newblock
\showISBNx{1-58113-495-9}


\bibitem[\protect\citeauthoryear{H{\'a}jek}{H{\'a}jek}{1960}]%
        {hajek:1960}
\bibfield{author}{\bibinfo{person}{J. H{\'a}jek}.}
  \bibinfo{year}{1960}\natexlab{}.
\newblock \showarticletitle{Limiting Distributions in Simple Random Sampling
  from a Finite Population}.
\newblock \bibinfo{journal}{{\em Publications of Mathematical Institute of
  Hungarian Academy of Sciences, Series A\/}}  \bibinfo{volume}{5}
  (\bibinfo{year}{1960}), \bibinfo{pages}{361--374}.
\newblock


\bibitem[\protect\citeauthoryear{Indyk}{Indyk}{2000}]%
        {stable-sk}
\bibfield{author}{\bibinfo{person}{P. Indyk}.} \bibinfo{year}{2000}\natexlab{}.
\newblock \showarticletitle{Stable Distributions, Pseudorandom Generators,
  Embeddings and Data Stream Computation}. In \bibinfo{booktitle}{{\em Proc. of
  the 41st Symposium on Foundations of Computer Science}}.
\newblock


\bibitem[\protect\citeauthoryear{Johnson, Lindenstrauss, and
  Schechtman}{Johnson et~al\mbox{.}}{1986}]%
        {Johnson1986}
\bibfield{author}{\bibinfo{person}{William~B. Johnson}, \bibinfo{person}{Joram
  Lindenstrauss}, {and} \bibinfo{person}{Gideon Schechtman}.}
  \bibinfo{year}{1986}\natexlab{}.
\newblock \showarticletitle{Extensions of lipschitz maps into Banach spaces}.
\newblock \bibinfo{journal}{{\em Israel Journal of Mathematics\/}}
  \bibinfo{volume}{54}, \bibinfo{number}{2} (\bibinfo{year}{1986}),
  \bibinfo{pages}{129--138}.
\newblock
\showISSN{1565-8511}
\showDOI{%
\url{http://dx.doi.org/10.1007/BF02764938}}


\bibitem[\protect\citeauthoryear{Jowhari, Saglam, and Tardos}{Jowhari
  et~al\mbox{.}}{2011}]%
        {Jowhari:Saglam:Tardos:11}
\bibfield{author}{\bibinfo{person}{Hossein Jowhari}, \bibinfo{person}{Mert
  Saglam}, {and} \bibinfo{person}{G{\'a}bor Tardos}.}
  \bibinfo{year}{2011}\natexlab{}.
\newblock \showarticletitle{Tight bounds for {Lp} samplers, finding duplicates
  in streams, and related problems}. In \bibinfo{booktitle}{{\em PODS}}.
  \bibinfo{pages}{49--58}.
\newblock


\bibitem[\protect\citeauthoryear{Keys, Moore, and Estan}{Keys
  et~al\mbox{.}}{2005}]%
        {KME:sigmetrics05}
\bibfield{author}{\bibinfo{person}{K. Keys}, \bibinfo{person}{D. Moore}, {and}
  \bibinfo{person}{C. Estan}.} \bibinfo{year}{2005}\natexlab{}.
\newblock \showarticletitle{A robust system for accurate real-time summaries of
  internet traffic}.
\newblock \bibinfo{journal}{{\em ACM SIGMETRICS Performance Evaluation
  Review\/}}  \bibinfo{volume}{33} (\bibinfo{year}{2005}).
\newblock


\bibitem[\protect\citeauthoryear{Li, Miao, Kim, and Yu}{Li
  et~al\mbox{.}}{2016}]%
        {li2016flowradar}
\bibfield{author}{\bibinfo{person}{Yuliang Li}, \bibinfo{person}{Rui Miao},
  \bibinfo{person}{Changhoon Kim}, {and} \bibinfo{person}{Minlan Yu}.}
  \bibinfo{year}{2016}\natexlab{}.
\newblock \showarticletitle{FlowRadar: a better NetFlow for data centers}. In
  \bibinfo{booktitle}{{\em 13th USENIX Symposium on Networked Systems Design
  and Implementation (NSDI 16)}}. USENIX Association,
  \bibinfo{pages}{311--324}.
\newblock


\bibitem[\protect\citeauthoryear{Liu, Golab, Golab, Ilyas, and Jin}{Liu
  et~al\mbox{.}}{2016a}]%
        {Liu:2016:SMD:3015779.3004295}
\bibfield{author}{\bibinfo{person}{Xiufeng Liu}, \bibinfo{person}{Lukasz
  Golab}, \bibinfo{person}{Wojciech Golab}, \bibinfo{person}{Ihab~F. Ilyas},
  {and} \bibinfo{person}{Shichao Jin}.} \bibinfo{year}{2016}\natexlab{a}.
\newblock \showarticletitle{Smart Meter Data Analytics: Systems, Algorithms,
  and Benchmarking}.
\newblock \bibinfo{journal}{{\em ACM Trans. Database Syst.\/}}
  \bibinfo{volume}{42}, \bibinfo{number}{1}, Article \bibinfo{articleno}{2}
  (\bibinfo{date}{November} \bibinfo{year}{2016}),
  \bibinfo{numpages}{39}~pages.
\newblock
\showISSN{0362-5915}


\bibitem[\protect\citeauthoryear{Liu, Manousis, Vorsanger, Sekar, and
  Braverman}{Liu et~al\mbox{.}}{2016b}]%
        {Liu:2016:OSR:2934872.2934906}
\bibfield{author}{\bibinfo{person}{Zaoxing Liu}, \bibinfo{person}{Antonis
  Manousis}, \bibinfo{person}{Gregory Vorsanger}, \bibinfo{person}{Vyas Sekar},
  {and} \bibinfo{person}{Vladimir Braverman}.}
  \bibinfo{year}{2016}\natexlab{b}.
\newblock \showarticletitle{One Sketch to Rule Them All: Rethinking Network
  Flow Monitoring with UnivMon}. In \bibinfo{booktitle}{{\em Proceedings of the
  2016 Conference on ACM SIGCOMM 2016 Conference}} {\em
  (\bibinfo{series}{SIGCOMM '16})}. \bibinfo{publisher}{ACM},
  \bibinfo{address}{New York, NY, USA}, \bibinfo{pages}{101--114}.
\newblock
\showISBNx{978-1-4503-4193-6}


\bibitem[\protect\citeauthoryear{Lo, Frankowski, and Leskovec}{Lo
  et~al\mbox{.}}{2016}]%
        {Lo:2016:UBL:2939672.2939729}
\bibfield{author}{\bibinfo{person}{Caroline Lo}, \bibinfo{person}{Dan
  Frankowski}, {and} \bibinfo{person}{Jure Leskovec}.}
  \bibinfo{year}{2016}\natexlab{}.
\newblock \showarticletitle{Understanding Behaviors That Lead to Purchasing: A
  Case Study of Pinterest}. In \bibinfo{booktitle}{{\em Proceedings of the 22Nd
  ACM SIGKDD International Conference on Knowledge Discovery and Data Mining}}
  {\em (\bibinfo{series}{KDD '16})}. \bibinfo{publisher}{ACM},
  \bibinfo{address}{New York, NY, USA}, \bibinfo{pages}{531--540}.
\newblock
\showISBNx{978-1-4503-4232-2}


\bibitem[\protect\citeauthoryear{Metwally, Agrawal, and El~Abbadi}{Metwally
  et~al\mbox{.}}{2005}]%
        {Metwally:2005:ECF:2131560.2131596}
\bibfield{author}{\bibinfo{person}{Ahmed Metwally}, \bibinfo{person}{Divyakant
  Agrawal}, {and} \bibinfo{person}{Amr El~Abbadi}.}
  \bibinfo{year}{2005}\natexlab{}.
\newblock \showarticletitle{Efficient Computation of Frequent and Top-k
  Elements in Data Streams}. In \bibinfo{booktitle}{{\em Proceedings of the
  10th International Conference on Database Theory}} {\em
  (\bibinfo{series}{ICDT'05})}. \bibinfo{publisher}{Springer-Verlag},
  \bibinfo{address}{Berlin, Heidelberg}, \bibinfo{pages}{398--412}.
\newblock
\showISBNx{3-540-24288-0, 978-3-540-24288-8}


\bibitem[\protect\citeauthoryear{Monemizadeh and Woodruff}{Monemizadeh and
  Woodruff}{2010}]%
        {MoWo:SODA2010}
\bibfield{author}{\bibinfo{person}{M. Monemizadeh} {and} \bibinfo{person}{D.~P.
  Woodruff}.} \bibinfo{year}{2010}\natexlab{}.
\newblock \showarticletitle{1-Pass Relative-Error L$_{\mbox{p}}$-Sampling with
  Applications}. In \bibinfo{booktitle}{{\em Proc. 21st ACM-SIAM Symposium on
  Discrete Algorithms}}. ACM-SIAM.
\newblock


\bibitem[\protect\citeauthoryear{Moshref, Yu, Govindan, and Vahdat}{Moshref
  et~al\mbox{.}}{2015}]%
        {Moshref:2015:SSR:2716281.2836099}
\bibfield{author}{\bibinfo{person}{Masoud Moshref}, \bibinfo{person}{Minlan
  Yu}, \bibinfo{person}{Ramesh Govindan}, {and} \bibinfo{person}{Amin Vahdat}.}
  \bibinfo{year}{2015}\natexlab{}.
\newblock \showarticletitle{SCREAM: Sketch Resource Allocation for
  Software-defined Measurement}. In \bibinfo{booktitle}{{\em Proceedings of the
  11th ACM Conference on Emerging Networking Experiments and Technologies}}
  {\em (\bibinfo{series}{CoNEXT '15})}. \bibinfo{publisher}{ACM},
  \bibinfo{address}{New York, NY, USA}, Article \bibinfo{articleno}{14},
  \bibinfo{numpages}{13}~pages.
\newblock
\showISBNx{978-1-4503-3412-9}


\bibitem[\protect\citeauthoryear{P\v{a}tra\c{s}cu and Thorup}{P\v{a}tra\c{s}cu
  and Thorup}{2012}]%
        {Patrascu:2012:PST:2220357.2220361}
\bibfield{author}{\bibinfo{person}{Mihai P\v{a}tra\c{s}cu} {and}
  \bibinfo{person}{Mikkel Thorup}.} \bibinfo{year}{2012}\natexlab{}.
\newblock \showarticletitle{The Power of Simple Tabulation Hashing}.
\newblock \bibinfo{journal}{{\em J. ACM\/}} \bibinfo{volume}{59},
  \bibinfo{number}{3}, Article \bibinfo{articleno}{14} (\bibinfo{date}{June}
  \bibinfo{year}{2012}), \bibinfo{numpages}{50}~pages.
\newblock
\showISSN{0004-5411}


\bibitem[\protect\citeauthoryear{Ros{\'e}n}{Ros{\'e}n}{1972}]%
        {Rosen1972:successive}
\bibfield{author}{\bibinfo{person}{B. Ros{\'e}n}.}
  \bibinfo{year}{1972}\natexlab{}.
\newblock \showarticletitle{Asymptotic Theory for Successive Sampling with
  Varying Probabilities Without Replacement, {I}}.
\newblock \bibinfo{journal}{{\em The Annals of Mathematical Statistics\/}}
  \bibinfo{volume}{43}, \bibinfo{number}{2} (\bibinfo{year}{1972}),
  \bibinfo{pages}{373--397}.
\newblock
\showURL{%
\url{http://www.jstor.org/stable/2239977}}


\bibitem[\protect\citeauthoryear{Ros{\'e}n}{Ros{\'e}n}{1997}]%
        {rosen1997a}
\bibfield{author}{\bibinfo{person}{B. Ros{\'e}n}.}
  \bibinfo{year}{1997}\natexlab{}.
\newblock \showarticletitle{Asymptotic theory for order sampling}.
\newblock \bibinfo{journal}{{\em J. Statistical Planning and Inference\/}}
  \bibinfo{volume}{62}, \bibinfo{number}{2} (\bibinfo{year}{1997}),
  \bibinfo{pages}{135--158}.
\newblock


\bibitem[\protect\citeauthoryear{Roy, Zeng, Bagga, Porter, and Snoeren}{Roy
  et~al\mbox{.}}{2015}]%
        {roy2015inside}
\bibfield{author}{\bibinfo{person}{Arjun Roy}, \bibinfo{person}{Hongyi Zeng},
  \bibinfo{person}{Jasmeet Bagga}, \bibinfo{person}{George Porter}, {and}
  \bibinfo{person}{Alex~C Snoeren}.} \bibinfo{year}{2015}\natexlab{}.
\newblock \showarticletitle{Inside the social network's (datacenter) network}.
  In \bibinfo{booktitle}{{\em ACM SIGCOMM Computer Communication Review}},
  Vol.~\bibinfo{volume}{45}. ACM, \bibinfo{pages}{123--137}.
\newblock


\bibitem[\protect\citeauthoryear{Vitter}{Vitter}{1985}]%
        {Vitter:85}
\bibfield{author}{\bibinfo{person}{J. Vitter}.}
  \bibinfo{year}{1985}\natexlab{}.
\newblock \showarticletitle{Random sampling with a reservoir}.
\newblock \bibinfo{journal}{{\em ACM Trans. Math. Softw.\/}}
  \bibinfo{volume}{11} (\bibinfo{year}{1985}).
\newblock
Issue 1.


\bibitem[\protect\citeauthoryear{Williams}{Williams}{1991}]%
        {W91}
\bibfield{author}{\bibinfo{person}{David Williams}.}
  \bibinfo{year}{1991}\natexlab{}.
\newblock \bibinfo{booktitle}{{\em Probability with Martingales}}.
\newblock \bibinfo{publisher}{Cambridge University Press}.
\newblock


\bibitem[\protect\citeauthoryear{Yu, Jose, and Miao}{Yu et~al\mbox{.}}{2013}]%
        {Yu:2013:SDT:2482626.2482631}
\bibfield{author}{\bibinfo{person}{Minlan Yu}, \bibinfo{person}{Lavanya Jose},
  {and} \bibinfo{person}{Rui Miao}.} \bibinfo{year}{2013}\natexlab{}.
\newblock \showarticletitle{Software Defined Traffic Measurement with
  OpenSketch}. In \bibinfo{booktitle}{{\em Proceedings of the 10th USENIX
  Conference on Networked Systems Design and Implementation}} {\em
  (\bibinfo{series}{nsdi'13})}. \bibinfo{publisher}{USENIX Association},
  \bibinfo{address}{Berkeley, CA, USA}, \bibinfo{pages}{29--42}.
\newblock
\showURL{%
\url{http://dl.acm.org/citation.cfm?id=2482626.2482631}}


\end{thebibliography}

\end{document}